\title{Exploiting Amplitude Control in Intelligent Reflecting Surface Aided Wireless Communication with Imperfect CSI}
\author{Ming-Min Zhao, \IEEEmembership{Member,~IEEE,} Qingqing Wu, \IEEEmembership{Member,~IEEE,} Min-Jian Zhao, \IEEEmembership{Member,~IEEE,} and Rui Zhang, \IEEEmembership{Fellow,~IEEE}
	\thanks{
		M. M. Zhao and M. J. Zhao are with the College of Information Science and Electronic Engineering, Zhejiang University (email: \{zmmblack, mjzhao\}@zju.edu.cn). Q. Wu is with the State Key Laboratory of Internet of Things for Smart City, University of Macau, Macau, 999078, and also with the National Mobile Communications Research Laboratory, Southeast University, Nanjing 210096, China (email: qingqingwu@um.edu.mo). 
		  R. Zhang is with the Department of Electrical and Computer Engineering, National University of Singapore (email: elezhangg@nus.edu.sg). This article was presented in part at the IEEE Global Communications Conference 2020 \cite{Zhaoglobecom2020}.
	}
}
\begin{document}
		\maketitle
	\begin{abstract} 
Intelligent reflecting surface (IRS) is a promising new paradigm to achieve high spectral and energy efficiency for future wireless networks by reconfiguring the wireless signal  propagation via passive reflection. To reap the promising gains of IRS, channel state information (CSI) is essential, whereas channel estimation errors are inevitable in practice due to limited channel training resources. In this paper, in order to optimize the performance of IRS-aided multiuser communications with imperfect CSI, we propose to jointly design the active transmit precoding at the access point (AP) and passive reflection coefficients of the IRS, each consisting of not only the conventional phase shift and also the newly exploited amplitude variation. First, the achievable rate of each user is derived assuming a practical IRS channel estimation method, which shows that the interference due to CSI errors is intricately related to the AP transmit precoders, the channel training power and the IRS reflection coefficients during both channel training and data transmission. Next, for the single-user case, by combining the benefits of the penalty method, Dinkelbach method and block successive upper-bound minimization (BSUM) method, a new penalized Dinkelbach-BSUM algorithm is proposed to optimize the IRS reflection coefficients for maximizing the achievable data transmission rate subjected to CSI errors; while for the multiuser case, a new penalty dual decomposition (PDD)-based algorithm is proposed to maximize the users' weighted sum-rate. Finally, simulation results are presented to validate the effectiveness of our proposed algorithms as compared to benchmark schemes. In particular, useful insights are drawn to characterize the effect of IRS reflection amplitude control (with/without the conventional phase-shift control) on the system performance under imperfect CSI.

	\end{abstract} 
	\begin{IEEEkeywords} 
		Intelligent reflecting surface, channel estimation, imperfect CSI, reflection amplitude control, phase-shift control, rate maximization.
	\end{IEEEkeywords}

\section{Introduction}
Due to the proliferation of mobile devices and increasing demand for high-speed data applications, various advanced wireless technologies such as massive multiple-input multiple-output (MIMO), ultra-dense network (UDN) and millimeter wave (mmWave) communications, have been proposed and thoroughly investigated to improve the wireless communication network spectral efficiency \cite{Boccardi2014}. However, these technologies generally incur higher energy consumption and hardware cost, due to the ever-increasing number of active nodes/antennas/radio-frequency (RF) chains employed in the network. To alleviate this issue, intelligent reflecting surface (IRS) has been proposed recently as a promising new paradigm to achieve highly spectral-efficient, yet low-cost and low-energy wireless systems in the future \cite{Wu2019Magazine, Wu2018_journal, Basar2019, Huang2019}. Specifically, IRS is a man-made planar metasurface composed of a large number of passive reflecting elements, each of which is able to induce certain amplitude and/or phase changes in its reflected signal, thus collaboratively altering the signal propagation from the transmitter to receiver(s) to achieve various objectives, such as signal enhancement and interference suppression \cite{cui2014coding}. Due to its low cost, IRS can be densely deployed in wireless networks. In addition, different from conventional active relays, IRS can achieve full-duplex signal reflection without self-interference and processing noise.

By properly designing the IRS reflection coefficients, it has been shown that IRS can significantly enhance the performance of various wireless systems (see, e.g., \cite{Yang2019, Cui2019, Jiang2019, zuo2020resource, ZhangMIMO}). However, in order to achieve the performance gains offered by IRS, the acquisition of accurate channel
state information (CSI) at the IRS, for the links with both its associated access point (AP) and users, is crucial, which however is practically difficult due to the passive nature of IRS and its large number of reflecting elements. In the literature, various methods have been proposed to efficiently estimate the IRS channels \cite{Mishra2019ICASSP, zheng2019intelligent, you2019progressive, wang2019channel, zheng2020intelligent, He2019_CE, chen2019channel}. Specifically, in \cite{Yang2019} and \cite{Mishra2019ICASSP}, an on/off reflection control based least-square (LS) channel estimation method was proposed, where only one IRS element is switched on to estimate the corresponding reflected channel  at each time. To exploit the IRS's large aperture in channel estimation, a discrete Fourier transform (DFT) reflection pattern based channel estimation method was proposed in \cite{zheng2019intelligent}, where the reflection amplitudes of all IRS elements are set to the maximum value of unity. In \cite{you2019progressive}, discrete phase shifts at the IRS were considered and the reflection patterns for channel estimation were designed under this practical constraint. In \cite{wang2019channel} and \cite{zheng2020intelligent}, IRS-aided multiuser system was considered and it was shown that the IRS channel training overhead can be effectively reduced by exploiting the fact that each IRS element reflects the signals from different users to the AP via the same IRS-AP channel. Besides, channel properties such as low-rank and sparsity were exploited in \cite{He2019_CE, chen2019channel} for IRS channel estimation.

Despite the progress in channel estimation for IRS-aided systems, channel estimation errors are inevitable in practice due to the limited channel training resources (such as power and time), which result in performance degradation. Therefore, it is crucial to take them into account when designing IRS reflections for data transmission. However, there are only few works that have studied robust IRS designs under imperfect CSI \cite{yu2019robust, zhou2019robust, zhou2020framework}. The robust designs in these works are based on certain canonical CSI error models, e.g., bounded error and statistical error models, while in practice the distribution of CSI errors depends on the specific channel estimation method adopted. Besides, existing works mainly exploit the IRS phase-shift control, while assuming full amplitude reflection of its elements. Thus, the effect of IRS amplitude control on its performance, especially under CSI errors, is unexploited yet to the authors' best knowledge. It is worth noting that with perfect CSI, IRS full reflection has been largely assumed in the existing literature to maximize the reflected signal power by IRS \cite{ Wu2018_journal, Wu2019Discrete, you2019progressive}. This is reasonable for the case with one single-antenna user, since the phase shift of each reflecting element can be adjusted such that the reflected signals from all reflecting elements are added constructively at the user receiver, even under the practical constraint with discrete phase shifts \cite{Wu2019Discrete}. Moreover, for the IRS-aided multiuser system with co-channel interference, it was shown in \cite{guo2019weighted} that the performance gain offered by amplitude control is almost negligible under perfect CSI. However, under imperfect CSI, it remains unknown whether exploiting the IRS reflection amplitude control is beneficial or not, which motivates this work.

In this paper, we consider an IRS-aided multiuser multiple-input single-output (MISO) system, where the reflection coefficients at the IRS (including both reflection amplitudes and phase shifts) and the active transmit precoders at the multi-antenna AP are jointly optimized to maximize the achievable rates of a set of single-antenna users under imperfect CSI. To characterize the distribution of CSI errors, the IRS channel estimation methods in \cite{zheng2019intelligent, you2019progressive} are considered, for which the statistics of CSI errors are obtained  accordingly. Then, tractable lower bounds of the mutual information between the transmit symbols at the AP and the received signals at the users, i.e., their achievable rates, are derived. It is shown that CSI errors cause additional interference that is intricately related to the AP transmit precoders, the uplink training power, and the IRS reflection designs in both channel training and data transmission. For maximizing the achievable rates, two new algorithms are proposed for the single-user and multiuser cases, respectively. In particular, we first consider the single-user case and show why full IRS reflection is generally undesired under imperfect CSI. Then, by leveraging the penalty method \cite{Bertsekas1999}, Dinkelbach method \cite{dinkelbach1967nonlinear} and block successive upper-bound minimization (BSUM) method \cite{Hong2016}, we propose a new penalized Dinkelbach-BSUM algorithm for solving the optimization problem efficiently, and prove its convergence. Next, for the general multiuser case, we propose a penalty dual decomposition (PDD)-based algorithm (similar to that in \cite{shi2017penalty, zhao2019intelligent}) to maximize the users' weighted sum-rate. Both algorithms can be easily modified to handle discrete/continuous IRS amplitude and/or phase-shift cases in practice. Numerical results validate the effectiveness of the proposed algorithms and show that by controlling IRS reflection amplitude under imperfect CSI, additional performance gains can be achieved over the conventional schemes with full reflection. Moreover, we show that under certain practical setups, controlling IRS reflection amplitude is more cost-effective than phase shift.

To the best of our knowledge, this is the first work on exploiting the IRS amplitude control for IRS-aided communication systems under {\it imperfect CSI} and the new contributions of this paper over the existing literature are summarized as follows:

1) We establish a general optimization framework for joint active and passive beamforming design in an IRS-aided communication system based on a practical CSI error model, where the IRS reflection amplitude is exploited for performance enhancement in addition to the conventional phase shift.

2) Two new algorithms are proposed for the single-user and multiuser cases, respectively, and both algorithms can handle continuous and discrete reflection amplitude/phase-shift cases and constitute efficient variable updating steps, which either admit closed-form solutions or can be carried out via simple iterative procedures.

3) Extensive numerical results are presented to validate the effectiveness of the proposed scheme with amplitude control and useful insights are drawn.

The rest of the paper is organized as follows. In Section \ref{section1}, we present the system model, channel estimation method and problem formulation. In Sections
\ref{Section_SU} and \ref{Section_MU}, we propose efficient algorithms to solve the formulated problems in the single-user and multiuser cases, respectively. In Section \ref{Section_Simulation}, numerical results are provided to evaluate the performance of the proposed algorithms. Finally, we conclude the paper in Section \ref{Section_conclusion}.

\emph{Notations}: Scalars, vectors and matrices are respectively denoted by lower/upper case, boldface lower case and boldface upper case letters. For an arbitrary matrix $\mathbf{A}$, $\mathbf{A}^T$, $\mathbf{A}^*$, $\mathbf{A}^{H}$ and $\mathbf{A}^{\dagger} $ denote its transpose, conjugate, conjugate transpose and pseudo-inverse, respectively. $\mathbf{A}^{-1}$ denotes the inverse of a square matrix $\mathbf{A}$ if it is invertible. $\mathbb{C}^{n\times m}$ denotes the space of $n\times m$ complex matrices. For matrices $\mathbf{A} \in \mathbb{C}^{N_1 \times M}$ and $\mathbf{B} \in \mathbb{C}^{N_2 \times M}$, $[\mathbf{A};\mathbf{B}] \in \mathbb{C}^{(N_1+N_2)\times M}$ denotes row-wise concatenation of $\mathbf{A}$ and $\mathbf{B}$. $\mathbf{A}_{mn,k}$ ($\mathbf{A}_{mn}$) denotes the element on the $m$-th row and $n$-column of matrix $\mathbf{A}_k$ ($\mathbf{A}$). $\|\cdot\|$ and $\|\cdot\|_{\infty}$ denote the Euclidean norm and infinity norm of a complex vector, respectively, and $|\cdot|$ denotes the absolute value of a complex scalar or the cardinality of a finite set. $\mathbf{a} \cdot \mathbf{b}$ denotes the dot product of two vectors. $\mathcal{CN}(\mathbf{x},\bm{\Sigma})$ denotes the distribution of a circularly symmetric complex Gaussian (CSCG) random vector with mean vector $\mathbf{x}$ and covariance matrix $\bm{\Sigma}$; and $\sim$ stands for ``distributed as''. For given numbers $x_1,\cdots,x_N$, $\textrm{diag}(x_1,\cdots,x_N)$ denotes a diagonal matrix with $\{x_1,\cdots,x_N\}$ being its diagonal elements and $\textrm{diag}(\mathbf{A})$ denotes a vector which contains the diagonal elements of matrix $\mathbf{A}$. The symbol $\jmath$ is used to represent $\sqrt{-1}$. For a complex number $x$, $\Re \{x\}$ denotes its real part and $\angle x$ denotes its angle. $\mathbf{I}$ and $\mathbf{0}$ denote an identity matrix and an all-zero vector with appropriate dimensions, respectively. $\mathbb{E}\{\cdot\}$ denotes the statistical expectation. The set difference is defined as $\mathcal{A}\backslash \mathcal{B} \triangleq \{x| x\in\mathcal{A},x\notin \mathcal{B}\}$. 

\section{System Model and Problem Formulation} \label{section1}

\subsection{System Model}
As shown in Fig. \ref{fig:figure1}, we consider an IRS-aided multiuser MISO downlink communication system, where an IRS composed of $N$ passive reflecting elements is deployed to assist in the communication from the AP to a set of $K$ users denoted by $\mathcal{K}\triangleq\{1,\cdots, K\}$. We assume that the AP is equipped with $M$ transmit antennas, and each user is equipped with a single antenna. The IRS is attached with a smart controller, which is  connected with the AP via a separate reliable wireless link and responsible for coordinating their operation as well as exchanging information such as reflection coefficients and CSI. The signals reflected by IRS two or more times are ignored due to the severe ``distance-product'' power loss over multiple reflections \cite{Wu2018_journal}.

\begin{figure}[!hhh] 
	\centering
	\scalebox{0.4}{\includegraphics{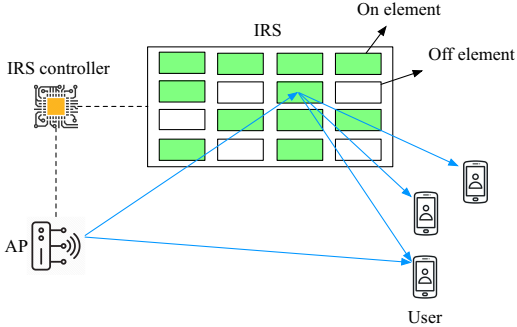}} 
	\caption{An IRS-aided multiuser MISO downlink system with on/off reflection.} 
	\label{fig:figure1}
\end{figure}

We consider quasi-static block-fading channels and all channels are assumed to remain approximately constant in each fading block. Let $\mathbf{h}_{d,k} \in \mathbb{C}^{M\times 1}$ with $k\in\mathcal{K}$ denote the direct (conjugate) channel vector of the AP-user $k$ link, $\mathbf{G} \in \mathbb{C}^{N\times M}$ denote the channel matrix of the AP-IRS link and $\mathbf{h}_{r,k} \in \mathbb{C}^{N\times 1}$ denote the (conjugate) channel vector of the IRS-user $k$ link. Thus, the received signal of user $k$ is expressed as
\begin{equation} \label{received_signal}
\begin{array}{l}
y_k =   (\mathbf{h}_{r,k}^H \mathbf{\Theta} \mathbf{G} + \mathbf{h}_{d,k}^H)\sum\limits_{j\in\mathcal{K}}\mathbf{w}_js_j + n_k,
\end{array}
\end{equation}
where $s_k\sim \mathcal{CN}(0, 1)$ denotes the transmit symbol for user $k$, with $s_k$'s assumed to be independent and identically distributed (i.i.d.); $\mathbf{w}_k \in \mathbb{C}^{M\times 1}$ represents the transmit precoder for user $k$; $n_k$ is the additive white Gaussian noise (AWGN) at user $k$ with zero-mean and variance $\sigma_k^2$; $\bm{\Theta}=\textrm{diag}(\phi_1,\cdots, \phi_n,\cdots,\phi_N)$ ($n \in \mathcal{N} \triangleq \{1,\cdots, N\}$) denotes the reflection-coefficient matrix at the IRS with $\phi_n = a_n e^{\jmath \theta_n}$, where $a_n \in [0, 1]$ and $\theta_n \in [0,2\pi)$ represent the reflection amplitude and phase shift of the $n$-th element,
respectively. In practice, due to hardware limitations, the amplitude and phase shift of each reflecting element can only take a finite number of discrete values \cite{Wu2019Discrete, Wu2019Magazine}. Besides, different from most existing works that assume $a_n=1,\; \forall n \in \mathcal{N}$, i.e., each element is designed to maximize the signal reflection \cite{Wu2018_journal, Wu2019Discrete}, we consider joint amplitude and phase-shift control in this paper to unveil the full benefit offered by IRS. Note that independent control over the reflection amplitude and phase shift of each IRS element can be achieved in practice, e.g., by equipping the reflecting elements with locally tunable integrated circuits (ICs) that provide a continuously tunable complex impedance \cite{PhysRevApplied}, or through multilayer surface design \cite{Nayeri2018book}. Let $Q_a$ and $Q_\theta$ denote the number of bits for reflection amplitude and phase-shift control per IRS element, respectively. We thus have 
\begin{equation} \label{A_Ps_constraint}
\phi_n \in \mathcal{F}_d \triangleq \{\phi_n| \phi_n = a_n e^{\jmath \theta_n}, \theta_n \in \mathcal{S}, a_n \in \mathcal{A} \}, 
\end{equation}
where $\mathcal{S} \triangleq \{0,\frac{2\pi}{L},\cdots, \frac{2\pi(L-1)}{L}\}$ with $L=2^{Q_\theta}$, i.e., the discrete phase-shift values are assumed to be equally spaced in the interval $[0,2\pi)$, and $\mathcal{A} \triangleq \{\bar{a}_1,\cdots, \bar{a}_{2^{Q_a}} \}$ denotes the controllable amplitude set which satisfies $|\mathcal{A}| = 2^{Q_a}$. Note that when $Q_a =0$, $\mathcal{A}$ reduces to the case of full reflection, i.e., $\mathcal{A} = \{1\}$, while when $Q_a =1$, $\mathcal{A}$ represents on/off reflection, i.e., $\mathcal{A} = \{0,1\}$, as shown in Fig. \ref{fig:figure1}. Furthermore, by letting $Q_\theta \rightarrow \infty $ and $Q_a  \rightarrow \infty$, the model in \eqref{A_Ps_constraint} becomes the case with continuous amplitude/phase shift, i.e., $\phi_n \in \mathcal{F}_c \triangleq \{\phi_n| \phi_n = a_n e^{\jmath \theta_n}, \theta_n \in [0,2\pi ), a_n \in [0, 1] \}$, which leads to the performance upper bound for discrete amplitude/phase-shift cases in practical systems.

Note that the reflection model of IRS depends on the specific design of the reflecting elements that interact with the incoming electromagnetic waves. In \cite{Abeywickrama2020, jung2019optimality}, a coupled reflection amplitude and phase shift model was proposed by modeling each reflecting element as a resonant circuit with certain inductance, capacitance, and resistance. Compared to the model in \cite{Abeywickrama2020, jung2019optimality}, the reflection model considered in this paper with independent control of the reflection amplitude and phase shift simultaneously is more flexible, but it also requires more sophisticated element design \cite{Wu2020tutorial}.

\subsection{Channel Estimation and Achievable Rate} \label{CE_model}
To achieve the joint active and passive beamforming gain offered by the IRS-aided system, accurate CSI at the AP/IRS is crucial, which however is practically difficult to obtain. In this paper, we assume that the downlink-uplink channel reciprocity holds, thus the downlink channel can be learned by estimating its counterpart in the uplink by varying the IRS reflection patterns \cite{zheng2019intelligent, you2019progressive}. Note that for multiuser systems, the uplink channel estimation overhead can be reduced by exploiting the fact that the IRS reflects the simultaneously transmitted pilot signals from all users to the AP via the same IRS-AP channel \cite{wang2019channel, zheng2020intelligent}. However, for ease of exposition and simplicity, we assume in this paper that the user channels are estimated consecutively in the uplink. The proposed beamforming designs in this paper are applicable to other channel estimation methods provided that the first and second order statistics of the channel estimation errors are available.

For completeness, we extend the time-varying reflection pattern based channel estimation method in \cite{zheng2019intelligent, you2019progressive} to our considered system as follows, where the reflection amplitudes are set to $a_n=1,\;\forall n$  to maximize the signal reflection for channel estimation. Specifically, during the channel training phase, user $k$  sends $N_r$ $(N_r \geq N+1)$ pilot symbols to the AP\footnote{If the IRS elements grouping scheme in \cite{Yang2019} is used, $N_r$ should be no less than $G+1$, where $G$ denotes the number of groups. In this paper, we do not consider IRS elements grouping for simplicity while the proposed algorithms can be applied with elements grouping as well. }, while the IRS phase shifts are varied over pilot symbols according to some pre-designed pattern. Let $\bar{\mathcal{N}} \triangleq \mathcal{N} \cup\{N+1,\cdots,N_r\}$. By stacking the received uplink signals at the AP $\{ \mathbf{y}_{u,k}^n\}_{n \in \bar{\mathcal{N}}}$ into $\mathbf{Y}_{k}  = [\mathbf{y}_{u,k}^1,\cdots,\mathbf{y}_{u,k}^{N_r} ]$ and applying the LS estimation, the estimated channel can be expressed as\footnote{When $M=1$ and $N_r = N+1$, the channel estimation becomes identical to that proposed in \cite{you2019progressive}.}
\begin{equation}
\bar{\mathbf{H}}_k = \left(\frac{1}{\sqrt{p_{u,k}} s_{u,k}}\mathbf{Y}_{k} \mathbf{V}^{\dagger}\right)^H =\tilde{\mathbf{H}}_k +   \frac{1}{\sqrt{p_{u,k}} s_{u,k}}  (\mathbf{V}^{\dagger})^H \mathbf{N}_{u,k}^H,
\end{equation}
where $s_{u,k}$ denotes the uplink training symbol which is assumed to be $1$ without loss of optimality; $p_{u,k}$ is the uplink training signal power of user $k$; 
$\mathbf{N}_{u,k} = [\mathbf{n}^1_{u,k},\cdots,\mathbf{n}^{N_r}_{u,k}]$ with $\mathbf{n}^n_{u,k} \sim \mathcal{CN}(0,\varepsilon_{k}^2 \mathbf{I})$ denoting the uplink AWGN vector; $\tilde{{\mathbf{H}}}_k \triangleq \left[\mathbf{h}_{d,k}^H; \mathbf{H}_k\right] \in \mathbb{C}^{(N+1)\times M}$, where $\mathbf{H}_{k} \triangleq \textrm{diag}(\mathbf{h}_{r,k}^H) \mathbf{G}$ denotes the cascaded AP-IRS-user $k$ channel without IRS reflection; $\mathbf{V} = \left[\tilde{\mathbf{v}}_1,\cdots,\tilde{\mathbf{v}}_{N_r}\right]$ with $\tilde{\mathbf{v}}_n \triangleq \left[ 1, \mathbf{v}_n^T\right]^T$ denotes the set of extended reflection vectors and $\mathbf{v}_n= \textrm{diag}\{ \bm{\Theta}_n^*\}$ represents the reflection pattern employed at the $n$-th training symbol duration; $\bar{\mathbf{H}}_k = [ \hat{\mathbf{h}}_{d,k}^H; \hat{\mathbf{H}}_k] $, $ \hat{\mathbf{h}}_{d,k}$ and $\hat{\mathbf{H}}_k$ denote the estimates of the direct channel and the cascaded AP-IRS-user $k$ channel, respectively. Note that for the continuous phase-shift case, $\{ \tilde{\mathbf{v}}_n\}$ can be chosen from the columns of the $(N+1)$ DFT matrix \cite{zheng2019intelligent}, while for the discrete phase-shift case, $\{ \tilde{\mathbf{v}}_n\}$ can be chosen to be the columns of the quantized DFT matrix or truncated Hadamard matrix (THM) according to the value of $Q_{\theta}$ \cite{you2019progressive}. Therefore, for the discrete phase-shift case, the CSI error matrix $\Delta \tilde {\mathbf{H}}_k = \bar{\mathbf{H}}_k - \tilde{\mathbf{H}}_k$ satisfies $\mathbb{E} \{ \Delta \tilde {\mathbf{H}}_k\} = \mathbf{0}$ and $\mathbb{E} \{ \Delta \tilde {\mathbf{H}}_k \Delta \tilde {\mathbf{H}}_k^H  \}  = \frac{M\varepsilon_{k}^2}{p_{u,k}} (\mathbf{V} \mathbf{V}^H)^{\dagger}$. In other words, if the extended reflection vectors are non-orthogonal due to discrete phase shifts and/or $N_r>N+1$, then the CSI errors of different channel coefficients in the direct channel and the cascaded channel are correlated. In the special case of continuous phase shifts with $N_r=N+1$, we have $\Delta \mathbf{h}_{d,k} = \hat{\mathbf{h}}_{d,k} - \mathbf{h}_{d,k} \sim \mathcal{CN}(\mathbf{0}, \delta_{d,k}^2\mathbf{I})$ and $\Delta \mathbf{H}_{k} = \hat{\mathbf{H}}_k - \mathbf{H}_k \sim \mathcal{CN} (\mathbf{0}, \delta_{h,k}^2\mathbf{I}) $, where $\delta_{d,k}^2 =\delta_{h,k}^2= \frac{\varepsilon_{k}^2}{(N+1) p_{u,k}}$, i.e., the elements of the CSI errors are i.i.d. CSCG distributed.

Due to CSI errors, there is information loss as compared to the perfect CSI case and we need to find the user achievable rate that takes the CSI errors into consideration. Unfortunately, it is difficult to derive the mutual information $I(s_k; y_k|\hat{\mathbf{H}}, \hat{\mathbf{h}}_d )$ in closed-form, where $\hat{\mathbf{H}}= \{\hat{\mathbf{H}}_{k}\}_{k\in\mathcal{K}}$ and $\hat{\mathbf{h}}_d = \{\hat{\mathbf{h}}_{d,k}\}_{k\in\mathcal{K}} $. We therefore turn to finding a tractable lower bound on the mutual information $I(s_k; y_k|\hat{\mathbf{H}}, \hat{\mathbf{h}}_d )$, i.e., the achievable rate, which is given in the following proposition.
\newtheorem{prop}{Proposition}
\begin{prop} \label{achievable_rate_prop}
	\emph{
	$I(s_k; y_k|\hat{\mathbf{H}}, \hat{\mathbf{h}}_d ) $ is lower-bounded by 
	\begin{equation} \label{lower_bound}
	I(s_k; y_k|\hat{\mathbf{H}}, \hat{\mathbf{h}}_d )  \geq \log \left( 1 + {|(\mathbf{v}^H \hat{\mathbf{H}}_{k}+\hat{\mathbf{h}}_{d,k}^H)\mathbf{w}_k|^2}/{\Psi_k^d }\right),
	\end{equation}
	 where $\mathbf{v} = \textrm{diag}(\bm{\Theta}^*)$,
	\begin{equation} \label{inr_robust}
	\begin{aligned}
	\Psi_{k}^d 
	 \triangleq &  \sum\limits_{j \in \mathcal{K}\backslash k}  |(\mathbf{v}^H \hat{\mathbf{H}}_{k}  + \hat{\mathbf{h}}_{d,k}^H) \mathbf{w}_j|^2\\
	& +  \underbrace{(\bar{v}_{11,k}+\mathbf{v}^H\mathbf{r}_k+ \mathbf{r}_k^H\mathbf{v} + \mathbf{v}^H \mathbf{R}_k \mathbf{v}) \sum\limits_{j\in\mathcal{K}} \|\mathbf{w}_j\|^2}_{\textrm{interference due to imperfect CSI}}+  \sigma_k^2,
	\end{aligned}
	\end{equation}
	$\bar{\mathbf{V}}_k \triangleq  \left[ \begin{matrix}\bar{v}_{11,k} & \mathbf{r}_k^H\\ \mathbf{r}_k & \mathbf{R}_k
	\end{matrix} \right]$, $\bar{\mathbf{V}}_{ij,k}=\frac{\varepsilon_{k}^2}{p_{u,k}} (\check{\mathbf{v}}_{:,i}^* \cdot \check{\mathbf{v}}_{:,j})$ and $\check{\mathbf{v}}_{:,i}$ denotes the $i$-th column of $\mathbf{V}^{\dagger}$.
}
\end{prop}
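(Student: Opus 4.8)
The plan is to invoke the classical worst-case (Gaussian) noise bound for channels with imperfect CSI (in the spirit of the treatments of M\'edard and Hassibi--Hochwald), specialized to the present IRS model. First I would substitute the estimate/error split $\tilde{\mathbf{H}}_k=\bar{\mathbf{H}}_k-\Delta\tilde{\mathbf{H}}_k$ into \eqref{received_signal}, using that the composite downlink channel is $\mathbf{v}^H\mathbf{H}_k+\mathbf{h}_{d,k}^H=\tilde{\mathbf{v}}^H\tilde{\mathbf{H}}_k$ with $\tilde{\mathbf{v}}=\left[1;\mathbf{v}\right]$, so that
\[
y_k=(\mathbf{v}^H\hat{\mathbf{H}}_k+\hat{\mathbf{h}}_{d,k}^H)\mathbf{w}_ks_k+z_k,
\]
where $z_k\triangleq\sum_{j\in\mathcal{K}\backslash k}(\mathbf{v}^H\hat{\mathbf{H}}_k+\hat{\mathbf{h}}_{d,k}^H)\mathbf{w}_js_j-\tilde{\mathbf{v}}^H\Delta\tilde{\mathbf{H}}_k\sum_{j\in\mathcal{K}}\mathbf{w}_js_j+n_k$ gathers the inter-user interference, the CSI-error-induced interference, and the receiver noise. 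I would then check that, given $(\hat{\mathbf{H}},\hat{\mathbf{h}}_d)$, the desired signal $s_k\sim\mathcal{CN}(0,1)$ is still Gaussian, that $z_k$ is zero-mean (the LS error $\Delta\tilde{\mathbf{H}}_k$ has zero mean with the stated second-order statistics independent of the estimate, and the $s_j$'s are zero-mean), and — crucially — that $z_k$ is \emph{uncorrelated} with $s_k$, because the data symbols are i.i.d.\ and independent of $\Delta\tilde{\mathbf{H}}_k$ and of $n_k$.

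Second, with a Gaussian input and an effective noise that is merely uncorrelated (not necessarily independent) with it, I would lower-bound the mutual information by replacing $z_k$ with a circularly-symmetric Gaussian of the same conditional variance $\Psi_k^d\triangleq\mathbb{E}\{|z_k|^2\mid\hat{\mathbf{H}},\hat{\mathbf{h}}_d\}$: writing $I(s_k;y_k|\hat{\mathbf{H}},\hat{\mathbf{h}}_d)=h(s_k)-h(s_k\,|\,y_k,\hat{\mathbf{H}},\hat{\mathbf{h}}_d)\geq h(s_k)-h(s_k-\alpha y_k\,|\,\hat{\mathbf{H}},\hat{\mathbf{h}}_d)$ for the LMMSE coefficient $\alpha$, then using the fact that the Gaussian distribution maximizes differential entropy for a given variance, one obtains $I(s_k;y_k|\hat{\mathbf{H}},\hat{\mathbf{h}}_d)\geq\log\!\big(1+|(\mathbf{v}^H\hat{\mathbf{H}}_k+\hat{\mathbf{h}}_{d,k}^H)\mathbf{w}_k|^2/\Psi_k^d\big)$. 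I would stress here that only uncorrelatedness of $z_k$ with $s_k$ is used, which is exactly what legitimizes this step.

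Third, I would evaluate $\Psi_k^d$ term by term. The noise contributes $\sigma_k^2$, and the other users' streams through the estimated channel contribute $\sum_{j\in\mathcal{K}\backslash k}|(\mathbf{v}^H\hat{\mathbf{H}}_k+\hat{\mathbf{h}}_{d,k}^H)\mathbf{w}_j|^2$ (using $\mathbb{E}\{|s_j|^2\}=1$ and independence across $j$). For the CSI-error term, independence across streams gives $\mathbb{E}\{|\tilde{\mathbf{v}}^H\Delta\tilde{\mathbf{H}}_k\sum_j\mathbf{w}_js_j|^2\}=\sum_j\mathbb{E}\{|\tilde{\mathbf{v}}^H\Delta\tilde{\mathbf{H}}_k\mathbf{w}_j|^2\}$; since $\Delta\tilde{\mathbf{H}}_k=\frac{1}{\sqrt{p_{u,k}}s_{u,k}}(\mathbf{V}^{\dagger})^H\mathbf{N}_{u,k}^H$, its columns (the per-AP-antenna error vectors) are i.i.d.\ zero-mean with covariance $\bar{\mathbf{V}}_k=\frac{\varepsilon_k^2}{p_{u,k}}(\mathbf{V}^{\dagger})^H\mathbf{V}^{\dagger}$ (consistent with $\mathbb{E}\{\Delta\tilde{\mathbf{H}}_k\Delta\tilde{\mathbf{H}}_k^H\}=\frac{M\varepsilon_k^2}{p_{u,k}}(\mathbf{V}\mathbf{V}^H)^{\dagger}$ divided by $M$), whence $\mathbb{E}\{\Delta\tilde{\mathbf{H}}_k\mathbf{w}_j\mathbf{w}_j^H\Delta\tilde{\mathbf{H}}_k^H\}=\|\mathbf{w}_j\|^2\bar{\mathbf{V}}_k$ and $\mathbb{E}\{|\tilde{\mathbf{v}}^H\Delta\tilde{\mathbf{H}}_k\mathbf{w}_j|^2\}=\|\mathbf{w}_j\|^2\,\tilde{\mathbf{v}}^H\bar{\mathbf{V}}_k\tilde{\mathbf{v}}$. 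Summing over $j$ and expanding $\tilde{\mathbf{v}}^H\bar{\mathbf{V}}_k\tilde{\mathbf{v}}$ with the block form of $\bar{\mathbf{V}}_k$ yields $(\bar v_{11,k}+\mathbf{v}^H\mathbf{r}_k+\mathbf{r}_k^H\mathbf{v}+\mathbf{v}^H\mathbf{R}_k\mathbf{v})\sum_{j\in\mathcal{K}}\|\mathbf{w}_j\|^2$, which together with the two previous terms reproduces \eqref{inr_robust} exactly.

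The main obstacle I expect is the conditioning: one must argue carefully that, conditioned on the channel estimates, the residual error $\Delta\tilde{\mathbf{H}}_k$ still has zero (conditional) mean and the stated estimate-independent covariance, so that the variance $\Psi_k^d$ computed above is indeed the conditional noise variance entering the worst-case-noise bound, and that $z_k$ remains conditionally uncorrelated with $s_k$. Once this is settled, the LMMSE/max-entropy inequality and the covariance bookkeeping are routine.
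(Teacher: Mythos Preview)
Your plan is correct and follows essentially the same route as the paper: expand the conditional mutual information via differential entropy, upper-bound $h(s_k\mid y_k,\hat{\mathbf{H}},\hat{\mathbf{h}}_d)$ by subtracting the LMMSE estimate and invoking the Gaussian max-entropy bound (the paper cites M\'edard as well), and then evaluate the resulting effective-noise variance $\Psi_k^d$ term by term. The only noticeable difference is in how you compute the CSI-error contribution $\mathbb{E}\{\tilde{\mathbf{v}}^H\Delta\tilde{\mathbf{H}}_k\sum_j\mathbf{w}_j\mathbf{w}_j^H\Delta\tilde{\mathbf{H}}_k^H\tilde{\mathbf{v}}\}$: the paper takes an eigendecomposition $\sum_j\mathbf{w}_j\mathbf{w}_j^H=\mathbf{Q}\boldsymbol{\Sigma}\mathbf{Q}^H$ and averages the resulting per-eigenmode quadratic forms, whereas you exploit directly that the $M$ columns of $\Delta\tilde{\mathbf{H}}_k$ are i.i.d.\ zero-mean with covariance $\bar{\mathbf{V}}_k=\frac{\varepsilon_k^2}{p_{u,k}}(\mathbf{V}^\dagger)^H\mathbf{V}^\dagger$, so that $\mathbb{E}\{\Delta\tilde{\mathbf{H}}_k\mathbf{w}_j\mathbf{w}_j^H\Delta\tilde{\mathbf{H}}_k^H\}=\|\mathbf{w}_j\|^2\bar{\mathbf{V}}_k$. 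Your argument is a bit more elementary and transparent; the paper's eigendecomposition is not actually needed, and both arrive at the same expression. Your flagged ``obstacle'' about conditioning (that the LS error statistics are used as if independent of the estimate) is a genuine subtlety that the paper also treats implicitly, in line with standard practice for this type of bound.
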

\begin{proof} \vspace{-0.5em}
	Please refer to Appendix \ref{appendix_achievable_rate}.
\end{proof}
From \eqref{lower_bound} and \eqref{inr_robust}, we observe that for user $k$, the achievable rate is obtained by treating $\tilde{\mathbf{v}}\bar{\mathbf{H}} \mathbf{w}_ks_k$ as the desired signal, while the signals from the other users $\tilde{\mathbf{v}}\tilde{\mathbf{H}} \sum\nolimits_{j\in\mathcal{K}\backslash k}\mathbf{w}_js_j$ and the CSI-error-induced term $\tilde{\mathbf{v}} \Delta \tilde{\mathbf{H}}_k\mathbf{w}_ks_k$ are regarded as the interference. Besides, the interference due to imperfect CSI is related to the following four main factors: 1) the reflection pattern used for channel training, i.e., $\mathbf{V}$; 2) the uplink training power $\{p_{u,k}\}$; 3) the reflection coefficient vector $\mathbf{v}$ for data transmission; and 4) the AP transmit precoders $\{\mathbf{w}_k\} $.

\subsection{Problem Formulation}
In this paper, we aim to maximize the weighted sum of achievable rates of all users, by jointly optimizing the transmit precoders $\{\mathbf{w}_k\}$ at the AP and the reflection coefficient vector $\mathbf{v}$ at the IRS with imperfect CSI, subject to the total transmit power constraint at the AP as well as the IRS reflection amplitude/phase-shift constraints. The considered optimization problem can be formulated as
\begin{equation} \label{robust_problem_d}
	\begin{aligned}
	\max \limits_{\{\mathbf{w}_k \}, \mathbf{v}}\;  &\sum\limits_{k \in \mathcal{K}} \alpha_k \log \left( 1 + {|(\mathbf{v}^H \hat{\mathbf{H}}_{k}+\hat{\mathbf{h}}_{d,k}^H)\mathbf{w}_k|^2}/{\Psi_k^d }\right) \\
	\textrm{s.t.} \; & \sum\limits_{k \in \mathcal{K}} \|\mathbf{w}_k\|^2 \leq P, \\
	&  v_n \in \mathcal{F}_d ,\;\forall n \in \mathcal{N},
	\end{aligned}
\end{equation}
where $\alpha_k$ represents the weight of user $k$ and $P$ denotes the maximum transmit power at the AP.

In the special case with continuous amplitude/phase shift and $N_r=N+1$, we have $\bar{v}_{11,k} = \delta_{d,k}^2$, $\mathbf{r}_k = \mathbf{0}$ and $\mathbf{R}_k = \delta_{h,k}^2 \mathbf{I}$. Accordingly, problem \eqref{robust_problem_d} reduces to
	\begin{equation} \label{robust_problem_c}
	\begin{aligned}
	\max \limits_{\{\mathbf{w}_k \}, \mathbf{v}} \; & \sum\limits_{k \in \mathcal{K}} \alpha_k \log \left( 1 + {|(\mathbf{v}^H \hat{\mathbf{H}}_{k}+\hat{\mathbf{h}}_{d,k}^H)\mathbf{w}_k|^2}/{\Psi_k^c }\right) \\
	 \textrm{s.t.} \;& \sum\limits_{k \in \mathcal{K}} \|\mathbf{w}_k\|^2 \leq P,\\
	& v_n \in \mathcal{F}_c,\;\forall n \in \mathcal{N},
	\end{aligned}
\end{equation}
where
\begin{equation} \label{denominator}
\begin{aligned}
\Psi_{k}^c  \triangleq & \sum\limits_{j \in \mathcal{K}\backslash k}   |(\mathbf{v}^H \hat{\mathbf{H}}_{k} + \hat{\mathbf{h}}_{d,k}^H) \mathbf{w}_j|^2 \\
& +  \underbrace{ \delta_{d,k}^2\sum\limits_{j\in\mathcal{K}} \|\mathbf{w}_j\|^2 + \delta_{h,k}^2 \sum\limits_{j \in \mathcal{K}}   \|\mathbf{w}_j\|^2 \mathbf{v}^H \mathbf{v}}_{\textrm{interference due to imperfect CSI}} +  \sigma_k^2.\\
\end{aligned}
\end{equation}
From \eqref{denominator}, it is observed that in this case, the main factors that affect the interference due to imperfect CSI become: 1) the variance of CSI errors $\delta_{d,k}^2/\delta_{h,k}^2$ (depending on the uplink training power $\{p_{u,k} \}$ and the number of reflecting elements $N$); 2) the sum of squared reflection amplitudes $\mathbf{v}^H\mathbf{v}$, i.e., $\sum_{n \in \mathcal{N}} a_n^2$; and 3) the AP transmit precoders $\{\mathbf{w}_k\} $.

Both problems \eqref{robust_problem_d} and \eqref{robust_problem_c} are challenging to solve because their objective functions are non-concave as well as that the transmit precoders and the IRS reflection coefficients are non-linearly coupled. Problem \eqref{robust_problem_d} is more complex than \eqref{robust_problem_c} since the constraints $v_n \in \mathcal{F}_d,\;\forall n \in \mathcal{N} $ render it a mixed-integer nonlinear program (MINLP). Besides, compared to the conventional case with phase-shift control only, more optimization variables are involved in problems \eqref{robust_problem_d} and \eqref{robust_problem_c} as the IRS reflection amplitudes can also be optimized and the reflection amplitudes/phase shifts will have a joint effect on the system performance. In general, there are no efficient methods for solving the non-convex problems \eqref{robust_problem_d} and \eqref{robust_problem_c} optimally. In the next two sections, we propose  efficient algorithms to solve problem \eqref{robust_problem_d} sub-optimally in the single-user and multiuser cases, respectively, which can be applied to solve problem \eqref{robust_problem_c} as well.

\newtheorem{remark}{Remark}

\section{Single-User System} \label{Section_SU}
In this section, we consider the single-user case, i.e., $K = 1$, to draw useful insights into the gain of reflection amplitude control in addition to that of phase shift by IRS. 
In this case, multiuser interference does not exist, therefore we can simply drop the subscript $k$ and ignore the multiuser interference terms in problem \eqref{robust_problem_d}, which leads to the following optimization problem:
\begin{equation} \label{single_user_problem_d}
\begin{aligned}
\max \limits_{\mathbf{w},\;\mathbf{v}}\;& \log \left(1+ \frac{|(\mathbf{v}^H \hat{\mathbf{H}} + \hat{\mathbf{h}}_d^H)\mathbf{w}|^2}{ \|\mathbf{w}\|^2 (\mathbf{v}^H \mathbf{R} \mathbf{v}+\mathbf{v}^H\mathbf{r}+ \mathbf{r}^H\mathbf{v} + \bar{v}_{11}) + \sigma^2}\right) \\
\textrm{s.t.}\; & \|\mathbf{w}\|^2 \leq P,\\
& v_n \in \mathcal{F}_d,\;\forall n \in \mathcal{N}.
\end{aligned}
\end{equation}

To illustrate that amplitude control is helpful under imperfect CSI, we focus on the $n$-th reflection coefficient $v_n = a_n e^{-\jmath \theta_n}$ and assume that the optimal values of the other $\{v_j^{\textrm{opt}} \}_{j \in \mathcal{N}\backslash n}$ are given and fixed. Then, the signal power term in the objective function of problem \eqref{single_user_problem_d} can be equivalently rewritten as
\begin{equation}
\mathbf{v}^H \mathbf{A} \mathbf{v} + \mathbf{v}^H \mathbf{b} + \mathbf{b}^H \mathbf{v} +\hat{\mathbf{h}}_d^H \mathbf{w}\mathbf{w}^H \hat{\mathbf{h}}_d =
a_n^2 \mathbf{A}_{nn} +  a_n c_n + d_n,
\end{equation}
where $\mathbf{A} = \hat{\mathbf{H}} \mathbf{w} \mathbf{w}^H \hat{\mathbf{H}}^H$, $\mathbf{b} = \hat{\mathbf{H}} \mathbf{w} \mathbf{w}^H  \hat{\mathbf{h}}_d$, $c_n = \sum_{j\in\tilde{\mathcal{N}} \backslash n}  2 \Re\{e^{\jmath \theta_n} \mathbf{A}_{nj}  v_j^{\textrm{opt}} \} + 2  \Re\{  e^{\jmath \theta_n} b_n\}$, $\tilde{\mathcal{N}} \triangleq \mathcal{N} \cup\{N+1\}$ and $d_n = \sum_{i\in\tilde{\mathcal{N}}\backslash n} \sum_{j\in\tilde{\mathcal{N}} \backslash n} (v_i^{\textrm{opt}})^* \bm{\Phi}_{ij}  v_j^{\textrm{opt}} +  \sum_{i\in \tilde{\mathcal{N}} \backslash n }  2\Re\{(v_i^{\textrm{opt}})^* b_i \} +\hat{\mathbf{h}}_d^H \mathbf{w}\mathbf{w}^H \hat{\mathbf{h}}_d $. 
Similarly, the interference-plus-noise power in \eqref{single_user_problem_d} can be expressed in a quadratic form of $a_n$ as $a_n^2 \mathbf{R}_{nn}\|\mathbf{w}\|^2 +  a_n e_n + f_n$, where $e_n$ and $f_n$ are not related to $a_n$ and their expressions are omitted for brevity. 
As a result, the objective function of problem \eqref{single_user_problem_d} can be rewritten as
\begin{equation} \label{SINR_bound}
\log \left( 1+ \frac{a_n^2 \mathbf{A}_{nn} +  a_n c_n + d_n}{a_n^2 \mathbf{R}_{nn}\|\mathbf{w}\|^2 +  a_n e_n + f_n} \right).
\end{equation}
From \eqref{SINR_bound},  we observe that $a_n=1$ is not necessarily optimal for maximizing the achievable rate, especially when $\mathbf{R}_{nn} \|\mathbf{w}\|^2 \gg \mathbf{A}_{nn}$ and $e_n \gg c_n$, which could be the case when CSI error becomes large.

Note that in \cite{you2019progressive}, a similar problem formulation to \eqref{single_user_problem_d} is considered under the assumption of single-antenna AP and full IRS reflection, i.e., $|v_n|=1,\forall n \in \mathcal{N}$, and a monotonic convergent algorithm is proposed by employing the block coordinate descent (BCD) method with the semidefinite relaxation (SDR) \cite{LuoSDR2010} based initialization. This algorithm is referred to as the SDR-BCD algorithm in the sequel, which can also be applied to solve problem \eqref{single_user_problem_d} by properly modifying the BCD method, i.e., when successively refining the reflection coefficients, one-dimensional search is applied over $\mathcal{F}_d$. The complexity of this algorithm is given by $\mathcal{O}(I_{\textrm{BCD}}2^{Q_{\theta}} 2^{Q_{a}}N^3 + (N+1)^{6.5} + I_{\textrm{r}}N^2)$, where $I_\textrm{BCD}$ denotes the number of iterations needed for convergence of the BCD method and $I_{\textrm{r}}$ is the number of Gaussian randomizations used for SDR \cite{Wang2014}. This complexity is quite high mainly due to the SDR-based optimization required for this algorithm. To reduce complexity, we propose in this section a new algorithm, called the penalized Dinkelbach-BSUM algorithm, to solve problem \eqref{single_user_problem_d} by combining the penalty method, Dinkelbach method and BSUM method, which achieves better performance yet with lower complexity as compared to the SDR-BCD method.

\subsection{Joint Discrete Amplitude and Phase-Shift Control}
First, we consider problem \eqref{single_user_problem_d} in the general case of discrete amplitude/phase shift. It is observed that the transmit power constraint $\|\mathbf{w}\|^2 \leq P$ in problem \eqref{single_user_problem_d} must be satisfied with equality at optimality since otherwise, we can always scale $\mathbf{w}$ properly such that its objective value is increased without violating any constraint. 
Besides, the maximum-ratio transmission (MRT) based solution of $\mathbf{w}$ is optimal in the single-user case, i.e., $\mathbf{w} = \sqrt{P}{(\mathbf{v}^H \hat{\mathbf{H}} + \hat{\mathbf{h}}_d^H)^H}/{\|\mathbf{v}^H \hat{\mathbf{H}} + \hat{\mathbf{h}}_d^H\|}$. Therefore, problem \eqref{single_user_problem_d} is equivalent to 
\begin{equation} \label{single_user_problem_d_eq2}
\begin{aligned}
\max \limits_{\mathbf{v}}\; & \frac{P\|\mathbf{v}^H \hat{\mathbf{H}} + \hat{\mathbf{h}}_d^H\|^2}{ P (\mathbf{v}^H \mathbf{R} \mathbf{v}+\mathbf{v}^H\mathbf{r}+ \mathbf{r}^H\mathbf{v} +\bar{v}_{11} ) + \sigma^2} \\
\textrm{s.t.}\; & v_n \in \mathcal{F}_d,\;\forall n \in \mathcal{N}.
\end{aligned}
\end{equation}

To solve problem \eqref{single_user_problem_d_eq2}, we introduce an auxiliary vector $\mathbf{u} = [u_1, \cdots, u_N]^T$, which satisfies $\mathbf{u} = \mathbf{v}$. This is to facilitate parallel updating of the IRS reflection coefficients in $\mathbf{v}$ and thus simplify their optimization, as will be specified later. Consequently, we have the following equivalent form of problem \eqref{single_user_problem_d_eq2}:
\begin{subequations} \label{single_user_problem_d_eq3}
\begin{align}
 \max \limits_{\mathbf{v},\;\mathbf{u}}\; & \frac{P(\mathbf{v}^H \hat{\mathbf{H}} \hat{\mathbf{H}}^H \mathbf{v}+2\Re\{\mathbf{v}^H \hat{\mathbf{H}} \hat{\mathbf{h}}_d \}+\hat{\mathbf{h}}_d^H \hat{\mathbf{h}}_d )  }{ P (\mathbf{v}^H \mathbf{R} \mathbf{v}+2\Re\{ \mathbf{v}^H\mathbf{r}\}+\bar{v}_{11} ) + \sigma^2 } \label{obj}\\
 \textrm{s.t.}\; & \|\mathbf{v}\|\leq N, \label{bound_constraint} \\
& \mathbf{v} = \mathbf{u}, \label{eq_constraint}\\
& u_n \in \mathcal{F}_d,\;\forall n \in\mathcal{N}, \label{reflection_constraint}
\end{align}
\end{subequations}
where the constraint \eqref{bound_constraint} is added without loss of generality to ensure that the optimization with respect to $\mathbf{v}$ with fixed $\mathbf{u}$ provides a bounded objective value. Note that the equality constraint \eqref{eq_constraint} hinders the alternating optimization of $\mathbf{v}$ and $\mathbf{u}$.
To address this issue, we convert the equality constraint \eqref{eq_constraint} into a quadratic function and then add it as
a penalty term in the denominator of \eqref{obj}, yielding the following optimization problem:
\begin{equation} \label{single_user_problem_d_relax}
	\begin{aligned}
	 \max \limits_{\mathbf{v},\;\mathbf{u}}\; & \frac{P(\mathbf{v}^H \hat{\mathbf{H}} \hat{\mathbf{H}}^H \mathbf{v}+2\Re\{\mathbf{v}^H \hat{\mathbf{H}} \hat{\mathbf{h}}_d \}+\hat{\mathbf{h}}_d^H \hat{\mathbf{h}}_d )  }{ P (\mathbf{v}^H \mathbf{R} \mathbf{v}+2\Re\{ \mathbf{v}^H\mathbf{r}\}+\bar{v}_{11} ) + \sigma^2 +\frac{1}{\beta}\|\mathbf{v} - \mathbf{u}\|^2 } \\
	\textrm{s.t.}\; & \|\mathbf{v}\|\leq N,\\
	& u_n \in \mathcal{F}_d,\;\forall n \in\mathcal{N},
	\end{aligned}
\end{equation}
where $\beta> 0$ denotes the penalty coefficient used for penalizing the violation of the equality constraint \eqref{eq_constraint}.
The proposed penalized Dinkelbach-BSUM algorithm aims to solve problem \eqref{single_user_problem_d_relax} via two loops. In the outer loop, we gradually decrease the value of $\beta$, such that a solution that satisfies the equality constraint \eqref{eq_constraint} within a predefined accuracy can be obtained. While in the inner loop, we apply the Dinkelbach method and BSUM method to iteratively optimize $\mathbf{v}$ and $\mathbf{u}$ with one of them being fixed.
It is worth pointing out that although the equality constraint \eqref{eq_constraint} is relaxed in problem \eqref{single_user_problem_d_relax}, any solution obtained by solving \eqref{single_user_problem_d_relax} always satisfies this equality constraint when $\beta \rightarrow 0$ ($\frac{1}{\beta} \rightarrow \infty$) since otherwise, $\frac{1}{\beta}\|\mathbf{v} - \mathbf{u}\|^2$ will go to infinity (i.e., the objective value will become zero) and we can always let $\mathbf{v}=\mathbf{u}$ to achieve a larger objective value.

Next, by applying the Dinkelbach method in the inner loop, we can transform problem \eqref{single_user_problem_d_relax} into the following Dinkelbach subproblem:
\begin{equation} \label{Dinkelbach_problem}
\begin{aligned}
\max \limits_{{\mathbf{v}},\;\mathbf{u}}\;& {P(\mathbf{v}^H \hat{\mathbf{H}} \hat{\mathbf{H}}^H \mathbf{v}+2\Re\{\mathbf{v}^H \hat{\mathbf{H}} \hat{\mathbf{h}}_d \}+\hat{\mathbf{h}}_d^H \hat{\mathbf{h}}_d )  } \\
 - y& \left({ P   (\mathbf{v}^H \mathbf{R} \mathbf{v}+2\Re\{ \mathbf{v}^H\mathbf{r}\}+\bar{v}_{11} ) + \sigma^2}+\frac{1}{\beta}\|\mathbf{v} - \mathbf{u}\|^2 \right)\\
\textrm{s.t.}\;& \|{\mathbf{v}}\|\leq N, \\
&  u_n \in \mathcal{F}_d, \;\forall n \in\mathcal{N},
\end{aligned}
\end{equation}
where $y$ is the Dinkelbach variable and can be iteratively updated by \eqref{dinkelbach_variable_update} (shown at the top of this page)
\begin{figure*}
\begin{equation} \label{dinkelbach_variable_update}
y[i_d+1] =   \frac{P(\mathbf{v}^H[i_d] \hat{\mathbf{H}} \hat{\mathbf{H}}^H \mathbf{v}[i_d]+2\Re\{\mathbf{v}^H[i_d] \hat{\mathbf{H}} \hat{\mathbf{h}}_d \}+\hat{\mathbf{h}}_d^H \hat{\mathbf{h}}_d )  }{ P (\mathbf{v}^H[i_d] \mathbf{R} \mathbf{v}[i_d]+2\Re\{ \mathbf{v}^H[i_d]\mathbf{r}\}+\bar{v}_{11} ) + \sigma^2 + \frac{1}{\beta}\|\mathbf{v}[i_d] - \mathbf{u}[i_d]\|^2}
\end{equation}
\hrulefill
\end{figure*}
with $i_d$ being the inner iteration index. To tackle problem \eqref{Dinkelbach_problem}, we employ the BSUM method to approximate it as follows (utilizing the first-order Taylor expansion and ignoring constant terms):
\begin{equation} \label{approximate_problem}
\begin{aligned}
\min \limits_{\mathbf{v},\;\mathbf{u}}\;& y{ P (\mathbf{v}^H \mathbf{R} \mathbf{v}+2\Re\{ \mathbf{v}^H\mathbf{r}\})} +\frac{y}{\beta}\|\mathbf{v} - \mathbf{u}\|^2 \\
&- {P( 2\Re\{ (\hat{\mathbf{H}} \hat{\mathbf{H}}^H \mathbf{v}[i_d])^H (\mathbf{v} - \mathbf{v}[i_d]) \}+2\Re\{\mathbf{v}^H \hat{\mathbf{H}} \hat{\mathbf{h}}_d \})}\\
\textrm{s.t.}\; &\|\mathbf{v}\|\leq N,\\
&  u_n \in \mathcal{F}_d,\;\forall n \in\mathcal{N}.
\end{aligned}
\end{equation}
Then, problem \eqref{approximate_problem} can be solved by alternately optimizing two blocks of variables, i.e., $\mathbf{v}$ and $\mathbf{u}$. Specifically, we can obtain the following two subproblems:
\begin{equation} \label{subproblem1}
\begin{aligned}
\min \limits_{\mathbf{v}}\; & y{ P (\mathbf{v}^H \mathbf{R} \mathbf{v}+2\Re\{ \mathbf{v}^H\mathbf{r}\}) } +\frac{y}{\beta}\|\mathbf{v} - \mathbf{u}\|^2 \\
& - {P( 2\Re\{ (\hat{\mathbf{H}} \hat{\mathbf{H}}^H \mathbf{v}[i_d])^H (\mathbf{v} - \mathbf{v}[i_d]) \}+2\Re\{\mathbf{v}^H \hat{\mathbf{H}} \hat{\mathbf{h}}_d \})}\\
\textrm{s.t.}\;& \|\mathbf{v}\|\leq N,
\end{aligned}
\end{equation}
\begin{equation} \label{subproblem2}
\begin{aligned}
\min \limits_{\mathbf{u}}\;& \frac{y}{\beta}\|\mathbf{v} - \mathbf{u}\|^2 \\
\textrm{s.t.}\; & u_n \in \mathcal{F}_d,\;\forall n \in\mathcal{N}.
\end{aligned}
\end{equation}
Problem \eqref{subproblem1} is a convex quadratically constrained quadratic program (QCQP) problem with only one constraint and the Slater's condition holds for it \cite{ConvexOptimization}. Therefore, it can be efficiently solved by applying the Lagrange duality method. Specifically, by letting $\mu$ denote the dual variable and exploiting the first-order optimality condition, we can obtain the following equality:
\begin{equation}
	\begin{aligned}
\Bigg(y P\mathbf{R} + & \left(\frac{y}{\beta}+\mu\right) \mathbf{I}\Bigg)  \bar{\mathbf{v}}\\
& = \frac{y}{\beta} \mathbf{u} + P \hat{\mathbf{H}} \hat{\mathbf{H}}^H \mathbf{v}[i_d] +P \hat{\mathbf{H}} \hat{\mathbf{h}}_d-yP \mathbf{r}.
\end{aligned}
\end{equation}
As s result, the optimal solution to problem \eqref{subproblem1} is given by
\begin{equation} \label{variable_update}
	\begin{aligned}
{\mathbf{v}} (\mu)= & \left(y P\mathbf{R} +  \left(\frac{y}{\beta}+\mu\right) \mathbf{I} \right) ^{-1} \\
&\times \left( \frac{y}{\beta} \mathbf{u} + P\hat{\mathbf{H}} \hat{\mathbf{H}}^H \mathbf{v}[i_d] + P\hat{\mathbf{H}} \hat{\mathbf{h}}_d-yP\mathbf{r} \right),
\end{aligned}
\end{equation}
where if $\|{\mathbf{v}} (0)\|\leq N$, then ${\mathbf{v}} (0)$ is the optimal solution; otherwise, the optimal dual variable $\mu^{\textrm{opt}}$ can be obtained via the bisection method. On the other hand, for problem \eqref{subproblem2}, we note that $\{u_n\}$ are decoupled in both the objective function and constraints. Thus, the optimal solution of  problem \eqref{subproblem2} can be obtained in parallel as follows:
\begin{equation} \label{u_update}
u_n^{\textrm{opt}} = \hat{a}_n  e^{\jmath \angle u_n },
\end{equation}
where $\angle u_n = \arg\min\limits_{\angle u_n \in \mathcal{S}} |\angle u_n - \angle v_n|$ and $\hat{a}_n = \arg\min\limits_{a_n \in \mathcal{A}} |a_n e^{\jmath \angle u_n} -v_n |$.

In summary, we can solve problem \eqref{single_user_problem_d_relax} by iterating over \eqref{dinkelbach_variable_update}, \eqref{variable_update} and \eqref{u_update} in the inner loop and gradually decreasing $\beta$ in the outer loop, and the proposed penalized Dinkelbach-BSUM algorithm is shown in Algorithm \ref{Dinkelbach_BSUM_algorithm_singleuser}. The constraint violation $\| \mathbf{v} - \mathbf{u}\|_{\infty}$ is evaluated as a measure of convergence and the scaling constant $c$ is imposed to gradually decrease the penalty coefficient $\beta$ such that $\| \mathbf{v} - \mathbf{u}\|_{\infty}$ is enforced to zero eventually. 
Note that in Steps 12-14, we further apply the BCD method to successively refine the IRS reflection coefficients. The optimal reflection coefficient for each
element is found by maximizing \eqref{obj} via one-dimensional search over $\mathcal{F}_d$, with those of the others being fixed, until \eqref{obj} converges.
The convergence of the proposed Algorithm \ref{Dinkelbach_BSUM_algorithm_singleuser} is given by the following proposition.
\begin{prop} \label{prop_convergent} 
	\emph{
With any given penalty coefficient $\beta$, the proposed Dinkelbach-BSUM algorithm in Steps 3-9 of Algorithm \ref{Dinkelbach_BSUM_algorithm_singleuser} is monotonically convergent.}
\end{prop}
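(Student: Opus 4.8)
The plan is to decouple the two mechanisms at work in Steps 3--9 — the Dinkelbach update of the scalar $y$ in \eqref{dinkelbach_variable_update} and the BSUM updates of $\mathbf{v}$ and $\mathbf{u}$ in \eqref{variable_update} and \eqref{u_update} — and then show that the objective values of the fixed-$\beta$ fractional problem \eqref{single_user_problem_d_relax} evaluated at the iterates form a sequence that is monotonically non-decreasing and bounded above. Denote by $f(\mathbf{v})$ the numerator in \eqref{obj} and by $g(\mathbf{v},\mathbf{u})$ the (penalized) denominator, so that \eqref{single_user_problem_d_relax} is $\max\{ f(\mathbf{v})/g(\mathbf{v},\mathbf{u}) : \|\mathbf{v}\|\le N,\ u_n\in\mathcal{F}_d\}$; note $f(\mathbf{v})=P\|\hat{\mathbf{H}}^H\mathbf{v}+\hat{\mathbf{h}}_d\|^2\ge 0$ and $g(\mathbf{v},\mathbf{u})\ge\sigma^2>0$ on the feasible set, so the ratio and the Dinkelbach update are always well defined and $y\ge 0$.

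First I would establish the minorize–maximize property of one inner cycle for a fixed Dinkelbach variable $y$. The only non-concave term in the objective of \eqref{Dinkelbach_problem} is the convex quadratic $P\,\mathbf{v}^H\hat{\mathbf{H}}\hat{\mathbf{H}}^H\mathbf{v}$ that is being maximized; replacing it by its first-order Taylor expansion at $\mathbf{v}[i_d]$ yields a global lower bound that is tight at $\mathbf{v}[i_d]$, so the objective of \eqref{approximate_problem} (in its stated minimization form) is a valid majorizer of $-[f(\mathbf{v})-y\,g(\mathbf{v},\mathbf{u})]$ that coincides with it at $(\mathbf{v}[i_d],\mathbf{u}[i_d])$. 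Each block update is then globally optimal for its block: \eqref{subproblem1} is a convex single-constraint QCQP for which Slater's condition holds, so the Lagrange-duality plus bisection procedure returns its global minimizer \eqref{variable_update}, and \eqref{subproblem2} separates across $n$ into independent projections onto $\mathcal{F}_d$ solved in closed form by \eqref{u_update}. Hence the surrogate objective does not increase over the cycle, and since it equals the true Dinkelbach objective at the start and majorizes it everywhere, the true Dinkelbach objective does not decrease, i.e. $f(\mathbf{v}[i_d+1])-y\,g(\mathbf{v}[i_d+1],\mathbf{u}[i_d+1])\ \ge\ f(\mathbf{v}[i_d])-y\,g(\mathbf{v}[i_d],\mathbf{u}[i_d])$.

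The second step chains this with the Dinkelbach update. Taking $y=y[i_d+1]$ as in \eqref{dinkelbach_variable_update} makes the right-hand side of the last inequality exactly zero by construction, hence $f(\mathbf{v}[i_d+1])\ge y[i_d+1]\,g(\mathbf{v}[i_d+1],\mathbf{u}[i_d+1])$; dividing by the strictly positive $g$ gives $y[i_d+2]\ge y[i_d+1]$. Since $y[i_d+1]$ is precisely the objective value of \eqref{single_user_problem_d_relax} at $(\mathbf{v}[i_d],\mathbf{u}[i_d])$, the objective-value sequence is monotonically non-decreasing. Boundedness follows because the feasible set of \eqref{single_user_problem_d_relax} is compact ($\|\mathbf{v}\|\le N$ is a closed ball and $\mathcal{F}_d$ is finite), $f$ is continuous and $g\ge\sigma^2$, so $f/g$ attains a finite maximum; a monotone sequence bounded above converges, which yields the claimed monotonic convergence. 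The bound $\|\mathbf{v}\|\le N$ introduced in \eqref{bound_constraint} is exactly what makes the intermediate subproblem \eqref{subproblem1} — and hence this whole chain — well posed.

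The step I expect to be the main obstacle is the coupling between the Dinkelbach loop and the BSUM loop: the classical Dinkelbach convergence argument assumes the parametric subproblem $\max f-y g$ is solved exactly, which does not hold here since BSUM only delivers an ascent step. The resolution, and the point that needs care in the write-up, is that exact maximization is unnecessary — it suffices that the BSUM cycle, warm-started at the current iterate where the Dinkelbach objective is exactly zero, returns a point with Dinkelbach objective $\ge 0$, and this is guaranteed by the tightness of the surrogate at the starting point together with per-block global optimality. As a minor additional remark, the inner BSUM iterations for a fixed $y$ are themselves monotone by the same majorize–maximize argument, so the scheme is well defined whether one performs a single cycle or iterates BSUM to convergence before each update of $y$.
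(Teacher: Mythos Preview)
Your proposal is correct and follows essentially the same approach as the paper's own proof: establish that one BSUM cycle (via the majorize--maximize surrogate property and per-block global optimality) does not decrease the Dinkelbach objective $f-yg$, combine this with the Dinkelbach update to conclude that $\{y[i_d]\}$ is nondecreasing, and invoke boundedness from the compactness of the feasible set. Your remarks on the inexact-Dinkelbach coupling and on the role of the bound \eqref{bound_constraint} make explicit precisely the points the paper compresses into its inequality chain \eqref{prop_convergent_1}--\eqref{prop_convergent_2}.
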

\begin{proof} \vspace{-0.5em}
	Please refer to Appendix \ref{appendix_C}.
\end{proof}
\noindent Together with the fact that the BCD method is guaranteed to converge, we can conclude that Algorithm \ref{Dinkelbach_BSUM_algorithm_singleuser} is convergent. Compared to the conventional Dinkelbach method, the proposed Algorithm \ref{Dinkelbach_BSUM_algorithm_singleuser} is more general since it can guarantee convergence even when the Dinkelbach subproblem \eqref{Dinkelbach_problem} is not globally solved and also handle the case with discrete variables. Besides, since problem \eqref{single_user_problem_d_eq2} is an integer (nonlinear) program, which is NP-complete, obtaining a convergent solution is perhaps the best we can do for problem \eqref{single_user_problem_d_eq2} with an acceptable computational complexity. Further investigation into more advanced algorithms is still open.

Note that the complexity of the proposed algorithm is mainly due to the update of $\mathbf{v}$ in \eqref{variable_update}, thus can be shown to be $\mathcal{O}(I_{\textrm{BCD}}2^{Q_{\theta}} 2^{Q_{a}}N^3 + I_P I_{D}N^{3}\log(1/\epsilon_{\textrm{bi}}))$, where $\epsilon_{\textrm{bi}}$ is the accuracy of the bisection method, and $I_P$ and $I_D$ denote the number of iterations required by the penalty method and Dinkelbach-BSUM method, respectively.

\begin{remark}
\emph{In Algorithm \ref{Dinkelbach_BSUM_algorithm_singleuser}, the key idea to deal with the discrete variables is by introducing a redundancy copy (i.e., the auxiliary vector $\mathbf{u}$) of the reflection coefficient vector $\mathbf{v}$, where the variables in $\mathbf{v}$ are relaxed to continuous values while those in $\mathbf{u}$ still remain discrete. Then, we introduce a penalty term $\frac{1}{\beta}\|\mathbf{v} - \mathbf{u}\|^2$ and optimize $\mathbf{v}$ and $\mathbf{u}$ in an iterative manner until $\mathbf{v}=\mathbf{u}$ is satisfied up to an acceptable accuracy. Note that this idea is different from the conventional relax-and-then-quantize scheme \cite{Wu2019Discrete}. Besides, although there are more discrete variables in problem \eqref{single_user_problem_d} as compared to the problem with phase-shift control only, this issue has little impact on the complexity of Algorithm \ref{Dinkelbach_BSUM_algorithm_singleuser}, as can be observed from \eqref{variable_update} and \eqref{u_update}. }
\end{remark}

\begin{algorithm}[t] \small
	\caption{{Proposed Algorithm for Solving Problem \eqref{single_user_problem_d_eq3} }} \label{Dinkelbach_BSUM_algorithm_singleuser}
	\begin{algorithmic}[1]
		\STATE Initialize ${\mathbf{v}}[0]$, $\mathbf{u}[0]$, $y[0]$ and $\beta$, set $\epsilon_{d}>0$ and $\epsilon_{p}>0$.
		\REPEAT
		\STATE Let the iteration index $i_{d} \leftarrow 0$.
		\REPEAT
		\STATE Update ${\mathbf{v}}[i_d+1]$, ${\mathbf{u}}[i_d+1]$ and the Dinkelbach variable $y[i_d+1]$ successively according to \eqref{variable_update}, \eqref{u_update} and \eqref{dinkelbach_variable_update}, respectively.
		\STATE Let $i_{d} \leftarrow i_{d} + 1$.
		\UNTIL{The fractional decrease of $y$ is below $\epsilon_{d}$. }
		\STATE Update the penalty coefficient $\beta$ by $\beta \leftarrow c \beta$.
		\UNTIL{The constraint violation $\|\mathbf{v}-\mathbf{u}\|_{\infty} $ is below $\epsilon_{p}$.}
			\REPEAT
		\STATE Update $\{v_n\}$ successively by one-dimensional search over $\mathcal{F}_d$.
		\UNTIL{The fractional increase of \eqref{obj} is below $\epsilon_{d}$. }
	\end{algorithmic}
\end{algorithm}

\subsection{Joint Continuous Amplitude and Phase-Shift Control}
In this subsection, we consider the continuous counterpart of problem \eqref{single_user_problem_d_eq2}, which can be expressed as follows:
\begin{equation} \label{single_user_problem_c_eq}
	\begin{aligned}
\max \limits_{\mathbf{v}}\; & \frac{P\|\mathbf{v}^H \hat{\mathbf{H}} + \hat{\mathbf{h}}_d^H\|^2}{P (\mathbf{v}^H \mathbf{R} \mathbf{v}+2\Re\{ \mathbf{v}^H\mathbf{r}\}+\bar{v}_{11} ) + \sigma^2 } \\
\textrm{s.t.}\; & v_n \in \mathcal{F}_c,\;\forall n \in \mathcal{N}.
\end{aligned}
\end{equation}
It is worth noting that the proposed penalized Dinkelbach-BSUM algorithm (i.e., Algorithm \ref{Dinkelbach_BSUM_algorithm_singleuser}) can be easily modified to solve problem \eqref{single_user_problem_c_eq} and the only difference lies in the $\mathbf{u}$-subproblem, which is shown as follows:
\begin{equation} \label{subproblem2_c}
	\begin{aligned}
\min \limits_{\mathbf{u}}\; & \frac{y}{\beta}\|\mathbf{v} - \mathbf{u}\|^2 \\
\textrm{s.t.}\; & |u_n|\leq 1,\;\forall n \in\mathcal{N}.
\end{aligned}
\end{equation}
Problem \eqref{subproblem2_c} is a simple projection problem with all the variables being decoupled in the objective function and constraints. Thus, the optimal solution can be easily obtained by
\begin{equation} \label{u_update_c}
u_n = \left\{\begin{array}{l}
v_n,\;\textrm{if}\; |v_n| \leq 1,\\
\frac{v_n}{|v_n|}, \;\textrm{otherwise}.
\end{array}
\right.
\end{equation}

Next, we modify the corresponding BCD method to further refine the IRS reflection coefficients. Since one-dimensional search over $|v_n| \leq 1,\;\forall n \in \mathcal{N}$ is inefficient, we apply the Lagrange duality method to obtain the optimal reflection coefficient of each element, by fixing those of the others, which leads to semi-closed-form expressions. Specifically, by letting $\bm{\Phi} = [ \hat{\mathbf{h}}_d^H \hat{\mathbf{h}}_d, \hat{\mathbf{h}}^H \hat{\mathbf{H}}^H; \hat{\mathbf{H}} \hat{\mathbf{h}},  \hat{\mathbf{H}} \hat{\mathbf{H}}^H ]$, we can equivalently express problem \eqref{single_user_problem_c_eq} in a more compact form as
\begin{equation} \label{continuous_problem}
	\begin{aligned}
\max \limits_{\tilde{\mathbf{v}}}\; & \frac{P\tilde{\mathbf{v}}^H \bm{\Phi} \tilde{\mathbf{v}}}{ P  \tilde{\mathbf{v}}^H \bar{\mathbf{V}}\tilde{\mathbf{v}}  + \sigma^2} \\
\textrm{s.t.}\; & |v_n| \leq 1,\;\forall n \in \mathcal{N}.
\end{aligned}
\end{equation}
For brevity, the details of applying the BCD method to solve problem \eqref{continuous_problem} is presented in Appendix \ref{appendix_BCD_continuous}.

To summarize, the proposed algorithm to solve problem \eqref{single_user_problem_c_eq} is given in Algorithm \ref{BCD_algorithm_singleuser}. Similar to the discrete case, the complexity of Algorithm \ref{BCD_algorithm_singleuser} can be shown to be $\mathcal{O}(I_{\textrm{BCD}}N^3 + I_P I_{D}N^{3}\log(1/\epsilon_{\textrm{bi}}))$. Besides, we can observe that by combing the techniques used in Algorithm \ref{Dinkelbach_BSUM_algorithm_singleuser} and Algorithm \ref{BCD_algorithm_singleuser}, the proposed algorithm can be easily extended to the other cases, e.g., continuous amplitude/discrete phase shift (CADP) and discrete amplitude/continuous phase shift (DACP). The details are omitted for brevity.

\begin{algorithm}[tbp] \small
	\caption{{Proposed Algorithm for Solving Problem \eqref{single_user_problem_c_eq} }} \label{BCD_algorithm_singleuser}
	\begin{algorithmic}[1]
		\STATE Initialize ${\mathbf{v}}$ by running Algorithm \ref{Dinkelbach_BSUM_algorithm_singleuser} with \eqref{u_update} replaced by \eqref{u_update_c}, set $\epsilon_{c}>0$.
		\REPEAT
		\STATE Update $\{v_n\}$ successively by solving problem \eqref{n_subproblem}.
		\UNTIL{The fractional increase of the objective function of problem \eqref{continuous_problem} is below $\epsilon_{c}$. }
	\end{algorithmic}
\end{algorithm}

\section{Multiuser System} \label{Section_MU}
In this section, we solve problem \eqref{robust_problem_d} under the general multiuser setup. In this case, the proposed algorithms in Section \ref{Section_SU} cannot be applied since the objective function is not in a fractional form as given in \eqref{single_user_problem_d_eq2}. To tackle this problem, we first employ the weighted sum mean squared error minimization (WMMSE) method \cite{Shi2011WMMSE} to transform problem \eqref{robust_problem_d} into a more tractable form. Then, we leverage the PDD framework to propose a double-loop iterative algorithm, where the inner loop seeks to solve an augmented Lagrangian (AL) problem using a block minimization technique, while the outer loop updates the dual variables and penalty coefficient according to the constraint violation, until the convergence is achieved. Interestingly, we show that each block of variables (including the active precoders $\{ \mathbf{w}_k\}$ and IRS reflection coefficients $\mathbf{v}$) in the proposed algorithm can be updated either in closed-form or by the simple bisection method (similar to the single-user case). Besides, the proposed PDD-based algorithm can be easily modified to address the continuous/discrete amplitude and/or phase-shift cases.

Also note that the rationale given in the single-user case for why amplitude control is generally required is also valid for the multiuser case, i.e., problems \eqref{robust_problem_d} and \eqref{robust_problem_c}. This is because for the $n$-th reflecting element, its resultant interference-plus-noise power in the multiuser case can also be written as a quadratic function of $a_n$, and the achievable rate of each user can be expressed in a similar form as that in \eqref{SINR_bound}. Intuitively, since there is more severe multiuser interference in this case, the performance gain offered by amplitude control is expected to be larger as compared to the single-user case (as will be shown by simulation in Section \ref{Section_Simulation}).

\subsection{Joint Discrete Amplitude and Phase-Shift Optimization} 

First, by letting $\hat{\mathbf{h}}_{k} = \hat{\mathbf{H}}_{k}^H \mathbf{v} + \hat{\mathbf{h}}_{d,k}$ denote the estimated effective channel vector between  the AP and user $k$, problem \eqref{robust_problem_d} can be written in a more compact form as
\begin{equation} \label{robust_problem_equi_x}
\begin{aligned}
\max \limits_{\{\mathbf{w}_k\}, \mathbf{v}} \; & \sum\limits_{k \in \mathcal{K}} \alpha_k r_k(\hat{\mathbf{h}}_{k},\{\mathbf{w}_k\}, \mathbf{v}) \\
\textrm{s.t.} \; & \sum\limits_{k \in \mathcal{K}}\|\mathbf{w}_k\|^2 \leq P,\\
& v_n \in \mathcal{F}_d,\;\forall n \in \mathcal{N},
\end{aligned}
\end{equation}
where $r_k(\hat{\mathbf{h}}_{k},\{\mathbf{w}_k\}, \mathbf{v})$ is given in \eqref{expression_rk}, shown at the top of this page.
\begin{figure*}
\begin{equation} \label{expression_rk}
r_k(\hat{\mathbf{h}}_{k},\{\mathbf{w}_k\}, \mathbf{v}) = \log \left(1+\frac{|\hat{\mathbf{h}}_{k}^H \mathbf{w}_k|^2}{\sum\limits_{j \in \mathcal{K}\backslash k}|\hat{\mathbf{h}}_{k}^H \mathbf{w}_j|^2 +(\bar{v}_{11,k}+\mathbf{v}^H\mathbf{r}_k+ \mathbf{r}_k^H\mathbf{v} + \mathbf{v}^H \mathbf{R}_k \mathbf{v})\sum\limits_{j\in\mathcal{K}} \|\mathbf{w}_j\|^2 + \sigma_k^2}\right)
\end{equation} 
\hrulefill
\end{figure*}
Then, to transform problem \eqref{robust_problem_equi_x} into a more tractable form, we apply the WMMSE method that leads to the following proposition.
\begin{prop} 
	\emph{
	 Problem \eqref{robust_problem_equi_x} has the same globally optimal solution as the following WMMSE problem:
	\begin{equation} \label{robust_problem_interference_equi2_x}
	\begin{aligned}
	\min \limits_{\{\mathbf{w}_k,q_k,g_k \}, \mathbf{v}} \; & \sum\limits_{k \in \mathcal{K}} \alpha_k(q_k e_k - \log q_k) \\
	\textrm{s.t.} \; & \sum\limits_{k\in\mathcal{K}} \|\mathbf{w}_k\|^2 \leq P,\\
	& v_n \in \mathcal{F}_d,\;\forall n \in \mathcal{N},
	\end{aligned}
	\end{equation}
	where $e_k= \mathbb{E}\{ (g_k^H y_k - s_k) (g_k^H y_k - s_k)^H\} $ denotes the mean squared error (MSE) of user $k$ and is given by
	\begin{equation} 
	\begin{aligned}
	e_k = & |g_k|^2\Big(\sum\limits_{j \in\mathcal{K}} |\hat{\mathbf{h}}_{k}^H \mathbf{w}_j|^2+ (\bar{v}_{11,k}+\mathbf{v}^H\mathbf{r}_k+ \mathbf{r}_k^H\mathbf{v} \\
	&+ \mathbf{v}^H \mathbf{R}_k \mathbf{v}) \sum\limits_{j\in \mathcal{K}} \| \mathbf{w}_j\|^2+ \sigma_k^2 \Big)
	 -2\Re(g_k^H \hat{\mathbf{h}}_{k}^H \mathbf{w}_k) + 1,
	\end{aligned}
	\end{equation}
$g_k$ and $q_k$ denote the receive (scaling) coefficient and weighting factor for user $k$, respectively.}
\end{prop}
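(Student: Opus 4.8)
The plan is to follow the classical WMMSE equivalence argument of \cite{Shi2011WMMSE}, adapted to accommodate the extra CSI‑error interference term appearing in \eqref{expression_rk}. The key structural observation is that, in problem \eqref{robust_problem_interference_equi2_x}, the auxiliary variables $\{g_k\}$ and $\{q_k\}$ are essentially unconstrained (only $q_k>0$ is needed) and fully decoupled across users and from the constraint set of $(\{\mathbf{w}_k\},\mathbf{v})$. Hence I would first carry out the inner minimization over $\{g_k,q_k\}$ with $(\{\mathbf{w}_k\},\mathbf{v})$ held fixed, show that this collapses the objective to an additive constant minus the weighted sum‑rate $\sum_{k}\alpha_k r_k(\hat{\mathbf{h}}_k,\{\mathbf{w}_k\},\mathbf{v})$, and then observe that, since the feasible set of $(\{\mathbf{w}_k\},\mathbf{v})$ is identical in \eqref{robust_problem_equi_x} and \eqref{robust_problem_interference_equi2_x}, minimizing the WMMSE objective jointly over all variables is equivalent to maximizing the weighted sum‑rate. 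The globally optimal $(\{\mathbf{w}_k\},\mathbf{v})$ therefore coincide, and the optimal $(g_k,q_k)$ are the closed‑form quantities produced along the way.

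Concretely I would proceed in three steps. \textbf{Step 1 (optimal receive coefficient).} For fixed $\{\mathbf{w}_k\},\mathbf{v},q_k$, the MSE $e_k$ is a strictly convex quadratic in the scalar $g_k$; setting $\partial e_k/\partial g_k^{*}=0$ yields the MMSE (Wiener) receiver $g_k^{\mathrm{opt}}=\hat{\mathbf{h}}_k^H\mathbf{w}_k/D_k$, where $D_k\triangleq\sum_{j\in\mathcal{K}}|\hat{\mathbf{h}}_k^H\mathbf{w}_j|^2+(\bar v_{11,k}+\mathbf{v}^H\mathbf{r}_k+\mathbf{r}_k^H\mathbf{v}+\mathbf{v}^H\mathbf{R}_k\mathbf{v})\sum_{j\in\mathcal{K}}\|\mathbf{w}_j\|^2+\sigma_k^2$ is exactly the total received ``signal $+$ interference $+$ CSI‑error $+$ noise'' power. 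Substituting back gives $e_k^{\mathrm{mmse}}=1-|\hat{\mathbf{h}}_k^H\mathbf{w}_k|^2/D_k=(1+\gamma_k)^{-1}$, with $\gamma_k$ the SINR inside the logarithm of \eqref{expression_rk}. \textbf{Step 2 (optimal weight).} The scalar map $q_k\mapsto q_k e_k^{\mathrm{mmse}}-\log q_k$ is strictly convex on $q_k>0$ and is minimized at $q_k^{\mathrm{opt}}=1/e_k^{\mathrm{mmse}}=1+\gamma_k$, attaining value $1+\log e_k^{\mathrm{mmse}}=1-\log(1+\gamma_k)=1-r_k$. \textbf{Step 3 (assembling).} Therefore, for every fixed feasible $(\{\mathbf{w}_k\},\mathbf{v})$,
\[
\min_{\{q_k,g_k\}}\ \sum_{k\in\mathcal{K}}\alpha_k\,(q_k e_k-\log q_k)\;=\;\sum_{k\in\mathcal{K}}\alpha_k\;-\;\sum_{k\in\mathcal{K}}\alpha_k\, r_k(\hat{\mathbf{h}}_k,\{\mathbf{w}_k\},\mathbf{v}),
\]
so the joint minimization of the left‑hand side over $(\{\mathbf{w}_k\},\mathbf{v})$ over the common feasible region is equivalent to maximizing $\sum_{k}\alpha_k r_k$ there; hence the optimal $(\{\mathbf{w}_k\},\mathbf{v})$ are identical and the optimal $(g_k,q_k)$ are recovered as above.

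The two scalar optimizations in Steps 1--2 are routine. The only point requiring care — the main obstacle — is the consistent treatment of the CSI‑error interference term: it depends on $(\mathbf{w},\mathbf{v})$ but not on $(g_k,q_k)$, so it must be absorbed into $D_k$ as an additional noise contribution and verified not to affect the Wiener‑filter / weight derivations. One should in particular confirm that the $e_k$ expression quoted in the proposition equals $\mathbb{E}\{|g_k^H y_k-s_k|^2\}$ under the effective channel model used to derive Proposition \ref{achievable_rate_prop}, after which the standard argument applies verbatim. Finally, the additive constant $\sum_{k}\alpha_k$ does not change the minimizer, which completes the equivalence.
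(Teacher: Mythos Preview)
Your proposal is correct and follows essentially the same approach as the paper: the paper likewise observes that $\{g_k\}$ and $\{q_k\}$ appear only in the objective, substitutes their optimal (MMSE‑receiver and inverse‑MSE) solutions, and then invokes \cite[Theorem~3]{Shi2011WMMSE} to establish the equivalence. Your write‑up simply makes these two inner optimizations explicit and carefully tracks the CSI‑error term inside $D_k$, which is exactly what is needed.
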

\begin{proof} \vspace{-0.5em}
	Since $\{g_k\}$ and $\{q_k\}$ only appear in the objective function of problem \eqref{robust_problem_interference_equi2_x}, by substituting their optimal solutions into the objective function of problem \eqref{robust_problem_interference_equi2_x} and following \cite[Theorem 3]{Shi2011WMMSE}, the equivalence between the two problems can be established.
\end{proof}

In order to simplify the optimization of $\mathbf{v}$ and facilitate its parallel updating in the proposed PDD-based algorithm, we introduce an auxiliary variable $\mathbf{u}$ (similar to the single-user case) and problem \eqref{robust_problem_interference_equi2_x} can then be equivalently transformed to 
\begin{equation} \label{robust_problem_interference_equi_x}
\begin{aligned}
\min \limits_{\{\mathbf{w}_k,q_k,g_k \}, \mathbf{v},\mathbf{u}} \; & \sum\limits_{k \in \mathcal{K}} \alpha_k(q_k e_k - \log q_k) \\
\textrm{s.t.} \; & \sum\limits_{k \in \mathcal{K}}\|\mathbf{w}_k\|^2 \leq P,\\
& \mathbf{v} = \mathbf{u},\\
& u_n \in \mathcal{F}_d,\;\forall n \in \mathcal{N}.
\end{aligned}
\end{equation}

Next, in the inner loop of the proposed PDD-based algorithm, we solve the following AL problem of \eqref{robust_problem_interference_equi_x} by applying the BCD method (see \cite{Zhao2019CL} for the derivation of the AL term):
\begin{equation} \label{AL_problem}
\begin{aligned}
\min \limits_{\{\mathbf{w}_k,g_k,q_k\},\mathbf{v}, \mathbf{u}} \; & \sum\limits_{k\in\mathcal{K}} \alpha_k(q_k e_k - \log q_k)  + \frac{1}{2\beta} \|\mathbf{v}-\mathbf{u}+\beta \bm{\mu} \|^2\\
\textrm{s.t.}\; & \sum\limits_{k\in\mathcal{K}}\|\mathbf{w}_k\|^2 \leq P,\\
& u_n \in \mathcal{F}_d,\;\forall n \in \mathcal{N}.
\end{aligned}
\end{equation}
where $\bm{\mu}$ is the dual variable vector associated with the constraint  $\mathbf{v} = \mathbf{u}$ and $\beta$ is the penalty coefficient. In particular, we observe that by dividing the optimization variables into the following five blocks: $\{\mathbf{w}_k\}$, $\{g_k\}$, $\{q_k\}$, $\mathbf{v}$ and $\mathbf{u}$, the optimization of each block of variables with the others fixed is much simplified as compared with those in problem \eqref{robust_problem_d}. As a result, we can successively solve each of the subproblems and the details are given as follows.

\subsubsection{The optimization of $g_k$} By fixing all the other variables, minimizing the (weighted) sum MSE $\sum_{k\in\mathcal{K}} \alpha_k q_k e_k $ leads to the linear minimum MSE (MMSE) receive coefficient, which is given by \eqref{LMMSE} shown at the top of the next page.
\begin{figure*}
\begin{equation} \label{LMMSE}
	g_k = \frac{\hat{\mathbf{h}}_{k}^H \mathbf{w}_k}{\sum\limits_{j \in\mathcal{K}} |\hat{\mathbf{h}}_{k}^H \mathbf{w}_j|^2+ (\bar{v}_{11,k}+\mathbf{v}^H\mathbf{r}_k+ \mathbf{r}_k^H\mathbf{v} + \mathbf{v}^H \mathbf{R}_k \mathbf{v}) \sum\limits_{j\in\mathcal{K}} \|\mathbf{w}_j\|^2+ \sigma_k^2 }
\end{equation}
\hrulefill
\end{figure*}
\subsubsection{The optimization of $q_k$} The optimal solution can be easily obtained as $q_k=\frac{1}{e_k}$.

\subsubsection{The optimization of $\mathbf{w}_k$} The update of the transmit precoders $\{\mathbf{w}_k \}$ can be conducted by solving the following QCQP problem:
\begin{equation} \label{w_k_problem}
	\begin{aligned}
\min \limits_{\{\mathbf{w}_k\}} \; & \sum\limits_{k\in\mathcal{K}} \alpha_kq_k e_k \\
\textrm{s.t.}\; & \sum\limits_{k\in\mathcal{K}}\|\mathbf{w}_k\|^2 \leq P.
\end{aligned}
\end{equation}
Since there is only one constraint in \eqref{w_k_problem} and $\mathbf{w}_k$'s are decoupled in the Lagrangian function associated with problem \eqref{w_k_problem}, its optimal solution can be obtained by exploiting the first-order optimality condition as follows:
\begin{equation} \label{w_k_update}
\begin{aligned}
\mathbf{w}_k(\nu) = &  \alpha_k q_k \Big(\sum\limits_{j \in \mathcal{K}}  (\alpha_j q_j  |g_j|^2\hat{\mathbf{h}}_{j}  \hat{\mathbf{h}}_{j}^H+\alpha_j q_j |g_j|^2  (\bar{v}_{11,k}\\
& +\mathbf{v}^H\mathbf{r}_k+ \mathbf{r}_k^H\mathbf{v} + \mathbf{v}^H \mathbf{R}_k \mathbf{v}) \mathbf{I})+ \nu \mathbf{I}\Big)^{-1} \hat{\mathbf{h}}_{k} g_k,
\end{aligned}
\end{equation}
where $\nu$ denotes the dual variable associated with the transmit power constraint. Similar to problem \eqref{subproblem1}, if $\sum_{k\in \mathcal{K}}\|\mathbf{w}_k(0)\|^2 \leq P $, then $\{\mathbf{w}_k(0)\}$ is the optimal solution; otherwise, we can always find the optimal dual variable via the bisection method.

\subsubsection{The optimization of $\mathbf{v}$} It can be observed that the $\mathbf{v}$-subproblem becomes the following unconstrained quadratic programming (QP) problem:
\begin{equation}
\begin{aligned}
\min\limits_{\mathbf{v}}\; & -\sum\limits_{k \in \mathcal{K}} \alpha_k q_k 2\Re(g_k^H ( \mathbf{v}^H \hat{\mathbf{H}}_{k} + \hat{\mathbf{h}}_{d,k}^H) \mathbf{w}_k) \\
& + \sum\limits_{k \in \mathcal{K}} \alpha_k q_k  |g_k|^2\sum\limits_{j \in\mathcal{K}} | ( \mathbf{v}^H \hat{\mathbf{H}}_{k} + \hat{\mathbf{h}}_{d,k}^H) \mathbf{w}_j|^2 \\
& + \sum\limits_{k \in \mathcal{K}} \alpha_k q_k   |g_k|^2 (\mathbf{v}^H\mathbf{r}_k+ \mathbf{r}_k^H\mathbf{v} + \mathbf{v}^H \mathbf{R}_k \mathbf{v})   \sum\limits_{j \in \mathcal{K}} \|\mathbf{w}_j\|^2 \\
& + \frac{1}{2\beta} \|\mathbf{v}-\mathbf{u}+\beta \bm{\mu} \|^2,
\end{aligned}
\end{equation}
whose optimal solution can be expressed as $\mathbf{v} = \bar{\mathbf{C}}^{-1} \bar{\mathbf{d}}$, where 
\begin{equation}
\begin{aligned}
\bar{\mathbf{C}} = &\sum\limits_{k \in \mathcal{K}} \alpha_k q_k   |g_k|^2  \sum\limits_{j\in\mathcal{K}} \hat{\mathbf{H}}_{k} \mathbf{w}_j \mathbf{w}_j^H \hat{\mathbf{H}}_{k}^H   \\
&+ \sum\limits_{k \in \mathcal{K}} \alpha_k q_k   |g_k|^2 \sum\limits_{j\in\mathcal{K}}\|\mathbf{w}_j\|^2 \mathbf{R}_k + \frac{1}{2\beta}\mathbf{I},
\end{aligned}
\end{equation}
\begin{equation}
\begin{aligned}
\bar{\mathbf{d}} =  & -\sum\limits_{k \in \mathcal{K}} \alpha_k q_k   |g_k|^2  \sum\limits_{j\in\mathcal{K}}\hat{\mathbf{H}}_{k} \mathbf{w}_{j} \mathbf{w}_j^H \hat{\mathbf{h}}_{d,k} +  \frac{1}{2\beta} (\mathbf{u}-\beta \bm{\mu})  \\
&+ \sum\limits_{k \in \mathcal{K}} \alpha_k q_k g_k^H \hat{\mathbf{H}}_{k} \mathbf{w}_k - \sum\limits_{k \in \mathcal{K}} \alpha_k q_k   |g_k|^2 \sum\limits_{j \in \mathcal{K}} \|\mathbf{w}_j \|^2 \mathbf{r}_k.
\end{aligned}
\end{equation}

\subsubsection{The optimization of $\mathbf{u}$} Finally, the $\mathbf{u}$-subproblem is given by
\begin{equation} \label{u_subproblem_interference_x}
	\begin{aligned}
\min \limits_{\mathbf{u}}\; & \|\mathbf{v}-\mathbf{u}+\beta \bm{\mu}\|^2 \\
\textrm{s.t.}\; & u_n \in \mathcal{F}_d,\;\forall n \in \mathcal{N}.
\end{aligned}
\end{equation}
Similar to problem \eqref{subproblem2}, the optimal solution of problem \eqref{u_subproblem_interference_x} can be easily obtained in parallel. The expression is omitted for brevity.

To summarize, the above five updating steps are successively performed in each iteration of the inner BCD method.  In the outer loop, the dual variable is updated by 
\begin{equation} \label{dual_update_multiuser}
\begin{array}{l}
\bm{\mu} = \bm{\mu} + \frac{1}{\beta}(\mathbf{v}-\mathbf{u}).
\end{array}
\end{equation}
The proposed PDD-based algorithm is summarized in Algorithm \ref{PDD_algorithm_multiuser} and it is guaranteed to converge \cite{zhao2019intelligent}. It is noteworthy that Algorithm \ref{PDD_algorithm_multiuser} can also converge without the dual variable $\bm{\mu}$, i.e., let $\bm{\mu}=\mathbf{0}$ (similar to the single-user case). However, by introducing $\bm{\mu}$, Algorithm \ref{PDD_algorithm_multiuser} exhibits better convergence behavior in terms of both the objective value and constraint violation \cite{shi2017penalty}. The complexity of Algorithm \ref{PDD_algorithm_multiuser} is dominated by the matrix inversion operations for optimizing $\{\mathbf{w}_k \}$ and $\mathbf{v}$. Therefore, the overall complexity can be shown to be $\mathcal{O}(I_oI_i(KN^3 \log(1/\epsilon_{\textrm{bi}})+N^3))$, where $I_o$ and $I_i$ denote
respectively the outer and inner iteration numbers required for convergence.

\begin{algorithm}[t] \small
	\caption{Proposed PDD-based Algorithm for Solving Problem \eqref{robust_problem_d}} \label{PDD_algorithm_multiuser}
	\begin{algorithmic}[1]
		\STATE Initialize $\mathbf{v}^0$, $\{\mathbf{w}_k^0\}$, set the outer iteration index $i_{\textrm{out}}=0$. Set $\epsilon_{\textrm{in}}>0$, $\epsilon_{\textrm{out}}>0$ and $c<1$.
		\REPEAT
		\STATE Set the inner iteration index $i_{\textrm{in}} = 0$.
		\REPEAT
		\STATE Update $\{g_k\}$, $\{q_k\}$, $\{\mathbf{w}_k\}$, $\mathbf{v}$ and $\mathbf{u}$ successively.
		\STATE Update the inner teration index: $i_{\textrm{in}} \leftarrow i_{\textrm{in}} + 1$.
		\UNTIL{The fractional decrease of the objective value of \eqref{AL_problem} is below the threshold $\epsilon_{\textrm{in}}$ or the maximum inner iteration number is reached.}
		\STATE Update the dual variables by \eqref{dual_update_multiuser} and decrease the penalty coefficient by $\beta  \leftarrow c\beta $.
		\STATE Update the outer iteration index: $i_{\textrm{out}} \leftarrow i_{\textrm{out}} + 1$.
		\UNTIL{The constraint violation $\| \mathbf{v} - \mathbf{u}\|_{\infty}$ is below the threshold $\epsilon_{\textrm{out}}$.}
	\end{algorithmic}
\end{algorithm}

\subsection{Joint Continuous Amplitude and Phase-Shift Optimization}
In this subsection, we address the continuous counterpart of problem \eqref{robust_problem_d}, by utilizing the PDD framework. It can be readily seen that  the updating steps of $\{g_k,q_k,\mathbf{w}_k \}$ and $\mathbf{v}$ are almost identical to those in the discrete amplitude/phase-shift case. The only difference lies in the optimization of $\mathbf{u}$, which can be decomposed into $N$ subproblems, i.e., 
\begin{equation} \label{subproblem3_c}
	\begin{aligned}
\min \limits_{u_n}\; & |v_n + \beta \mu_n-u_n|^2 \\
\textrm{s.t.}\; & u_n \in \mathcal{F}_c.
\end{aligned}
\end{equation}
Similar to problem \eqref{subproblem2_c}, these subproblems can be solved in parallel and their optimal solutions can be easily obtained. 
Note that for the DACP and CADP cases, the optimization of $\mathbf{u}$ is also the only difference in the proposed PDD-based algorithm and it can be similarly addressed.
Besides, the complexity of the proposed algorithm in the continuous amplitude/phase-shift case is similar to that in the discrete case, since the complexity of optimizing $\{\mathbf{w}_k \}$ and $\mathbf{v}$ dominates.

\begin{remark}
	\emph{
		Note that the proposed PDD-based algorithm (i.e., Algorithm \ref{PDD_algorithm_multiuser}) can be easily modified to solve the single-user problems \eqref{single_user_problem_d} and \eqref{single_user_problem_c_eq}. However, the proposed Algorithm \ref{Dinkelbach_BSUM_algorithm_singleuser} and Algorithm \ref{BCD_algorithm_singleuser} are more suitable for the single-user case since the WMMSE method is not required and thus less auxiliary variables are introduced (or less number of blocks in the inner loop required), which leads to faster convergence and thus more efficient solutions. }
\end{remark}

\section{Simulation Results} \label{Section_Simulation}
In this section, we provide numerical results by simulations to evaluate the performance of the proposed algorithms and draw useful insights. The distance-dependent path loss is modeled as $L = C_0\left({d_{\textrm{link}}}/{D_0}\right)^{-\alpha}$, where $C_0$ is the path loss at the reference distance $D_0 = 1$ meter (m), $d_{\textrm{link}}$ represents the individual link distance and $\alpha$ denotes the path-loss exponent. The path-loss exponents of the AP-user, AP-IRS and IRS-user links are denoted by $\alpha_{Au}$, $\alpha_{AI}$ and $\alpha_{Iu}$, respectively. We assume that the IRS is deployed to serve the users that suffer from severe signal attenuation in the AP-user direct link and thus we set $\alpha_{Au} = 3.6$ and $\alpha_{AI} = \alpha_{Iu} = 2.2$, i.e., the path-loss exponent of the AP-user link is larger than those of the AP-IRS and IRS-user links. In our simulations, a three-dimensional coordinate system is considered where the AP (equipped with a uniform linear array (ULA)) and the IRS (equipped with a uniform rectangular array (UPA)) are located on the $x$-axis and $y$-$z$ plane, respectively. We set $N = N_yN_z$ where $N_y$ and $N_z$ denote the numbers of reflecting elements along the $y$-axis and $z$-axis, respectively. For the purpose of exposition, we fix $N_y = 4$. As shown in Fig. \ref{user_setup}, the reference antenna/element at the AP/IRS are located at $(2\;\textrm{m}, 0, 0)$ and $(0, d_0=45\;\textrm{m}, 2\;\textrm{m})$ and the locations of the users are randomly generated in the cluster. 
To account for small-scale fading, we assume the Rician fading channel model for all channels involved in general. Thus, the AP-IRS channel $\mathbf{G}$ is given by $\mathbf{G} = \sqrt{{\beta_{AI}}/{(1+\beta_{AI})}} \mathbf{G}^{\textrm{LoS}} + \sqrt{{1}/{(1+\beta_{AI})}} \mathbf{G}^{\textrm{NLoS}}$, 
where $\beta_{AI}$ is the Rician factor, $ \mathbf{G}^{\textrm{LoS}}$ and $ \mathbf{G}^{\textrm{NLoS}}$ represent the deterministic line-of-sight (LoS) and Rayleigh fading non-LoS (NLoS) components, respectively. The AP-user and IRS-user channels are also generated by following the similar procedure and the Rician factors of these two links are denoted by $\beta_{Au}$ and $\beta_{Iu}$, respectively. Without loss
of generality, we assume that all users use the same uplink training power during channel training, i.e., $p_{u,k} = p_u$, $\forall k \in \mathcal{K}$, and the user rate weights are the same and thus set to one, i.e.,  $\alpha_k =1$, $\forall k \in \mathcal{K}$. Other system parameters are set as follows unless otherwise specified: $N_r=N+1$, $\sigma_k^2 = -80$ dBm, $C_0 = -30$ dB, $P=26$ dBm, $M=4$,  $\beta_{Au} = \beta_{Iu}   = 0$ and $ \beta_{AI} = 3$ dB. The simulations are implemented in MATLAB R2016a and carried out on a PC with Intel i5 CPU running at 2.3 GHz and with 8 GB RAM.

\begin{figure}[!hhh] 
	\centering
	\scalebox{0.45}{\includegraphics{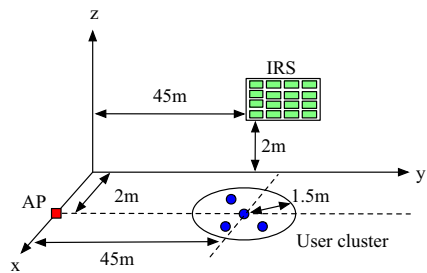}}
	\caption{Simulation setup of the considered IRS-aided multiuser MISO downlink system.}
	\label{user_setup}
\end{figure}

Before rate performance comparison, we first illustrate in Fig. \ref{figure_channel_estimation} the performance of the studied channel estimation method in the single-user case (while the results in the multiuser case are similar and thus omitted due to space limitation), where the normalized MSE (defined as ${\| \bar{\mathbf{H}} - \tilde{\mathbf{H}}\|^2}/{\|\tilde{\mathbf{H}}\|^2 }$) is adopted as the performance metric. In Fig. \ref{figure_channel_estimation} (a), we plot the average normalized MSE versus the uplink training power $p_u$, where $N$ is fixed as $120$. It is observed that the channel estimation performance improves with the increasing of $p_u$ and $Q_{\theta}$, which is expected since larger $p_u$ implies higher channel training signal-to-noise ratio (SNR) and with larger $Q_{\theta}$, we are able to design reflection patterns which are more near-orthogonal. Besides, in Fig. \ref{figure_channel_estimation} (b), we plot the average normalized MSE versus the number of reflecting elements, $N$, with fixed $p_u = 18$ dBm. It can be seen that larger $N$ generally leads to smaller normalized MSE, which is mainly due to the fact that larger $N$ leads to higher aperture gain and thus stronger signal power during channel estimation.\footnote{Note that although the number of unknown channel coefficients increases with $N$, which may deteriorate the channel estimation performance, the number of time slots $N_r$ that are dedicated for channel estimation also increases and this can compensate the performance loss caused by the increase in the number of unknown channel coefficients.} However, the case of $Q_{\theta}=1$ and $N=60$ is especially worth-noting as the associated channel estimation performance is even better than that when $Q_{\theta}=1$ and $N=80$. This is because the reflection patterns are chosen to be the columns of the THM when $Q_{\theta}=1$ \cite{you2019progressive}, and the columns of the $61\times61$ THM (truncated from a $64\times64$ Hadamard matrix) are more near-orthogonal than those of the $81\times81$ THM (truncated from a $96\times96$ Hadamard matrix).

\begin{figure}[!hhh] 
	\centering
	\scalebox{0.35}{\includegraphics{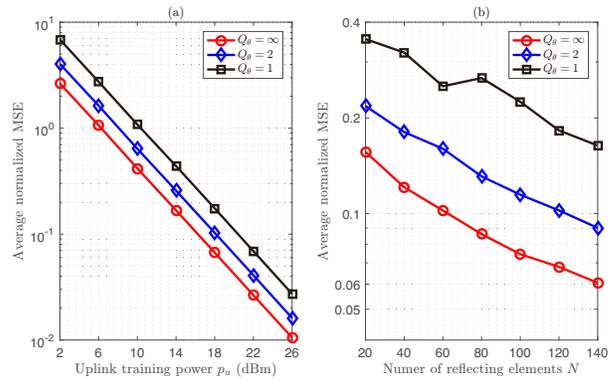}}
	\caption{Average normalized MSE versus uplink training power, $p_u$, or number of reflecting elements, $N$, with different values of $Q_{\theta}$.} 
	\label{figure_channel_estimation}
\end{figure}

\subsection{Single-User System}
\subsubsection{Performance Comparison with Existing Algorithms} 
Then, we investigate the performance of the proposed penalized Dinkelbach-BSUM algorithm  (i.e., Algorithm \ref{Dinkelbach_BSUM_algorithm_singleuser}) with discrete amplitude/ phase-shift control at the IRS. For comparison, we consider two benchmark algorithms: 1) the SDR-BCD algorithm in \cite{you2019progressive} with $100$ randomizations, and 2) the BCD algorithm, where the IRS reflection coefficients are randomly initialized and then the BCD method is employ to successively optimize the coefficient of each element with the others fixed. By varying the value of $p_u$, we examine in Fig. \ref{figure_single_user} the achievable rates obtained by the considered algorithms with $N=120$. It is observed that Algorithm \ref{Dinkelbach_BSUM_algorithm_singleuser} achieves the best performance especially when $p_u \leq 18$ dBm. This is because amplitude control is more important in the low $p_u$ regime; while compared with the benchmark algorithms, Algorithm \ref{Dinkelbach_BSUM_algorithm_singleuser} can directly handle the discrete constraints $v_n \in \mathcal{F}_d,\;\forall n \in \mathcal{N}$ by the introduction of the auxiliary variable $\mathbf{u}$ and penalty term $\frac{1}{\beta}\|\mathbf{v}- \mathbf{u}\|^2$. As a result, it potentially offers a better initial point which is more favorable for the subsequent BCD method. Besides, we compare in Table \ref{Table_runing_time} the running time of the proposed algorithm with that of the SDR-BCD algorithm for various values of $N$. The involved semidefinite programming (SDP) problem in the SDR-BCD algorithm is solved by CVX \cite{CVX}. One can observe that the time consumed by the proposed algorithm is significantly less than that of the SDR-BCD algorithm. Note that the proposed algorithm (as well as the reflection amplitudes and phase shifts) needs to be executed (updated) based on the channel coherence time, which is typically on the order of millisecond (ms). Although the required computational time of the proposed algorithm is larger than the typical channel coherence time, it can be effectively reduced by customizing the algorithm for hardware implementation in practical communication systems.

\begin{figure}[!hhh] 
	\centering
	\scalebox{0.42}{\includegraphics{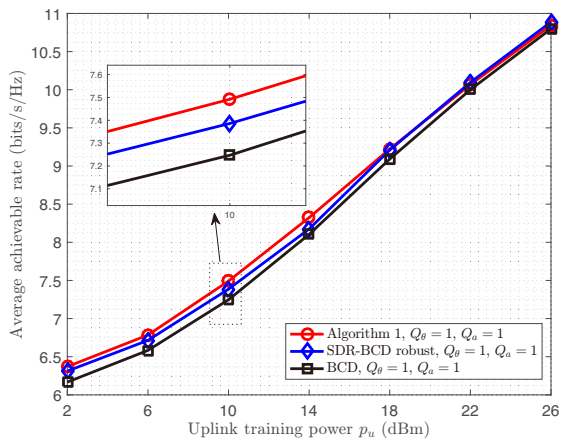}}
	\caption{Achievable rate versus uplink training power, $p_u$.} 
	\label{figure_single_user}
\end{figure}

\renewcommand{\arraystretch}{1.3}
\begin{table}[htbp] 
	\centering
	\caption{Computational Time Comparison} \label{Table_runing_time}
\begin{tabular}{|c|c|c|c|c|}
	\hline  \small
	\multirow{2}{*}{$N$}  & \multicolumn{4}{c|}{Running time (s)} \\ \cline{2-5} 
	& 10      & 20      & 40      & 80      \\ \hline
	Algorithm \ref{Dinkelbach_BSUM_algorithm_singleuser} & 0.0190  & 0.0254  & 0.0454  & 0.1595  \\ \hline
	SDR-BCD            & 0.3429  & 0.4148  & 0.4748  & 1.0260  \\ \hline
\end{tabular}
\end{table}

\subsubsection{Impact of Uplink Training Power, $p_u$} 
In Fig. \ref{Fig_SU_compare_ph}, we compare the achievable rate and IRS elements off percentage (EOP) of the proposed algorithm and the simulation parameters are the same to those in Fig. \ref{figure_channel_estimation} (a). Note that ``$Q_{\theta}=0$, $Q_a = 1$'' denotes the case where the reflection phase shifts are all set to zero and fixed\footnote{For fair comparison, we assume discrete phase shifts with $Q_{\theta} = 1$ in this case for channel training although $Q_{\theta} = 0$ is adopted for data transmission.} while the reflection amplitudes are optimized over $\{0,1\}$, which is thus referred to as \emph{amplitude beamforming}. In contrast, the case ``$Q_{\theta}=1$, $Q_a = 0$'' without amplitude control is referred to as \emph{phase beamforming}. Note that although both amplitude beamforming and phase beamforming contain two reflection states and both cases can be solved by the same algorithm (i.e., Algorithm \ref{Dinkelbach_BSUM_algorithm_singleuser} for the single-user case and Algorithm \ref{PDD_algorithm_multiuser} for the multiuser case), they are different in terms of performance as the former can turn the reflecting elements on or off such that the constructive signal is preserved and the destructive signal can be discarded, while the latter is able to adjust the directions of the reflected signals to align with the desired signal \cite{Wu2020tutorial}. First, it is observed from Fig. \ref{Fig_SU_compare_ph} (a) that the proposed scheme with amplitude control outperforms that with full reflection in both continuous and discrete phase-shift cases. For example, by using $1$-bit phase shifters with on/off amplitude control ($Q_\theta = 1$, $Q_a=1$), the achievable rate is noticeably higher than that achieved by using phase-shift control only ($Q_\theta = 1$, $Q_a = 0$). Second, the performance gain offered by amplitude control is more significant when $p_u$ is lower. This is because amplitude control is more helpful when the CSI is less accurate. In other words, more elements should be turned off to minimize the interference caused by imperfect CSI when the training SNR is lower, as shown in Fig. \ref{Fig_SU_compare_ph} (b). Due to a similar reason, the performance of amplitude beamforming ($Q_\theta= 0$ , $Q_a=1$) is close to that of phase beamforming ($Q_\theta=1$, $Q_a=0$) when $p_u$ is small, although in practice, the former is generally of lower cost to implement as compared to the latter. On the contrary, when $p_u$ is large, phase beamforming is better since the advantage of amplitude control is less significant.

\begin{figure}[!hhh]  
	\centering
	\scalebox{0.35}{\includegraphics{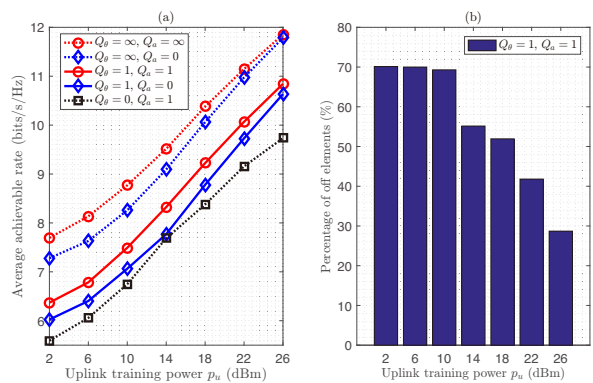}}
	\caption{Achievable rate and IRS EOP versus uplink training power, $p_u$.}
	\label{Fig_SU_compare_ph}
\end{figure}

Moreover, we consider the following two schemes: 1) the conventional scheme by using the MRT beamforming at the AP without the IRS, and 2) the nonrobust scheme by employing the PDD-based algorithm in \cite{zhao2019intelligent}, but ignoring the CSI errors.\footnote{For the nonrobust scheme, the optimization is first carried out by assuming that there is no CSI error in the estimated channel and then the optimized solution is substituted into \eqref{obj} to obtain the achievable rate with imperfect CSI.} 
In Fig. \ref{Fig_SU_compare_benchmark}, we plot the achievable rates by different schemes under the same simulation setup as for Fig. \ref{Fig_SU_compare_ph}.  First, as expected, it is observed that the rate by the proposed scheme is significantly higher than those by the scheme without IRS and the nonrobust scheme. Second, we observe that the performance of the nonrobust scheme is worse than that of the scheme without IRS when $p_u$ is lower than about $20$ dBm. This is due to the fact that the impact of CSI errors is more significant on the AP-IRS-user link than that on the direct AP-user link (since the former experiences ``distance-product'' power loss), therefore ignoring the CSI errors in the nonrobust scheme results in substantial performance degradation. This result also illustrates the importance of robust designs in IRS-aided communication systems by taking into account the CSI errors.
	
	\begin{figure}[!hhh] 
		\centering
		\scalebox{0.42}{\includegraphics{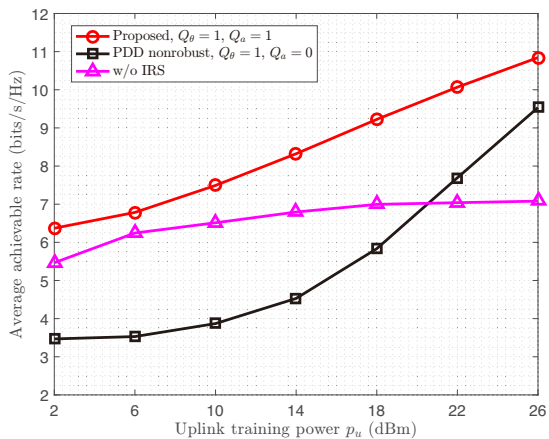}}
		\caption{Achievable rate versus uplink training power, $p_u$.}
		\label{Fig_SU_compare_benchmark}
	\end{figure}

\subsubsection{Impact of Uplink Training Duration, $N_r$} 
Fig. \ref{figure_compare_training_time} plots the achievable rate of the proposed scheme versus $N_r$ with different values of $Q_\theta$ that controls the number of discrete phase-shift levels of IRS, with $N=20$ and $p_u=10$ dBm. In this case, we assume that there are in total $T_0$ symbols in each transmission frame, among which $N_r$ symbols are used for channel training. Consequently, the achievable rate is multiplied by a factor of $\frac{T_0-N_r}{T_0}$ to account for the training overhead. First, it is observed that there exists a tradeoff between the achievable rate and $N_r$. This is due to the fact that smaller $N_r$ leads to coarser CSI, which results in inefficient reflection design and thus less passive beamforming gain, while larger $N_r$ results in less number of symbols for data transmission in each frame. As a result, the optimal value of $N_r$ decreases with the increasing of $Q_\theta$. Second, similar to the results in Fig. \ref{Fig_SU_compare_ph}, the performance gain offered by amplitude control gradually decreases with $N_r$, i.e., when the training SNR increases. In addition, one can observe that the performance gain of amplitude control decreases with the increasing of $Q_\theta$, which is expected since larger $Q_\theta$ provides more refined phase beamforming that renders amplitude beamforming less effective.

\begin{figure}[!hhh] 
	\centering
	\scalebox{0.42}{\includegraphics{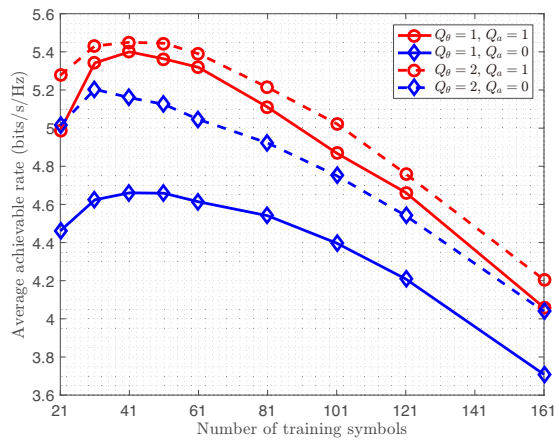}}
	\caption{Achievable rate versus number of uplink training symbols, $N_r$.}
	\label{figure_compare_training_time}
\end{figure}

\begin{figure}[!hhh] 
	\centering
	\scalebox{0.35}{\includegraphics{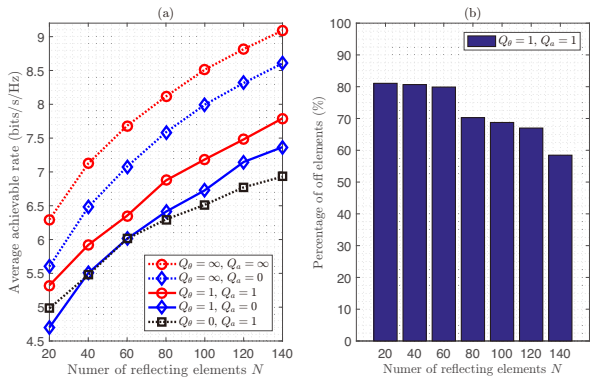}}
	\caption{Achievable rate and IRS EOP versus number of reflecting elements, $N$.}
	\label{Fig_SU_compare_N}
\end{figure}

\subsubsection{Impact of Number of Reflecting Elements, $N$} 
Next, in Fig. \ref{Fig_SU_compare_N}, we investigate the achievable rate  and IRS EOP versus the number of reflecting elements, $N$, with fixed $p_u = 10$ dBm. From Fig. \ref{Fig_SU_compare_N} (a), we observe that the performance gain of the proposed scheme with amplitude control is not sensitive to the value of $N$. This is mainly because in the interference power term $\mathbb{E} \{\tilde{\mathbf{v}}^H \Delta \tilde{\mathbf{H}}_k  \sum\nolimits_{j\in \mathcal{K}} \mathbf{w}_j \mathbf{w}_j^H \Delta \tilde{\mathbf{H}}_k^H \tilde{\mathbf{v}}\}$ (i.e., the term $(a)$ given in \eqref{effective_INR}), the channel MSE $\textrm{Tr}( \Delta \tilde {\mathbf{H}}_k \Delta \tilde {\mathbf{H}}_k^H)$ is inversely proportional to $N$, while with given $ \mathbf{A}_k=\Delta \tilde{\mathbf{H}}_k  \sum\nolimits_{j\in \mathcal{K}} \mathbf{w}_j \mathbf{w}_j^H \Delta \tilde{\mathbf{H}}_k^H$, $\tilde{\mathbf{v}}^H \mathbf{A}_k \tilde{\mathbf{v}}$ is proportional to $N$; as a result, the interference power does not scale with $N$. Besides, from Fig. \ref{Fig_SU_compare_N} (b), we can see that the IRS EOP decreases with the increasing of $N$. This is because when $N$ is small, the IRS can only perform very coarse phase beamforming, thus more elements should be turned off to control the interference due to CSI errors. For the same reason, the performance of amplitude beamforming is better than that of phase beamforming when $N$ is small.

\subsubsection{Impact of Maximum Downlink Transmit Power, $P$} 
Last, in Fig. \ref{Fig_SU_compare_P}, we plot the achievable rate and IRS EOP versus $P$ with $N=120$ and $p_u = 10$ dBm. We observe from Fig. \ref{Fig_SU_compare_P} (a) that the performance gain of the proposed scheme with amplitude control enlarges with the increasing of $P$. 
This is due to the fact that the  interference caused by imperfect CSI increases with  $P$ and thus the number of on elements of IRS needs to be reduced  in order to suppress the interference. This also explains why the IRS EOP increases with $P$, as shown in Fig. \ref{Fig_SU_compare_P} (b). In addition, we can observe that the performance achieved by amplitude beamforming improves faster than that of phase beamfoming as $P$ increases, since amplitude control is more useful in the large-$P$ regime.

\begin{figure}[!hhh] 
	\centering
	\scalebox{0.35}{\includegraphics{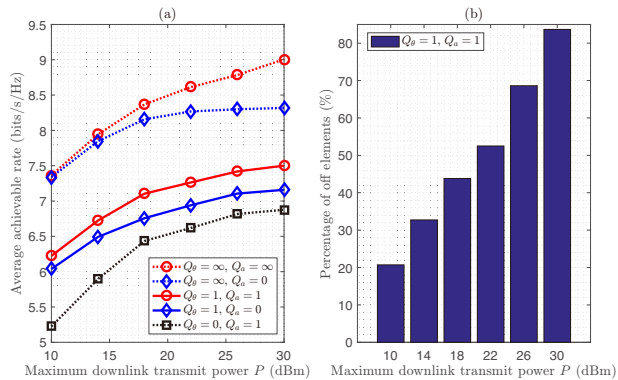}}
	\caption{Achievable rate and IRS EOP versus maximum downlink transmit power, $P$.}
	\label{Fig_SU_compare_P}
\end{figure}

\subsection{Multiuser System}

In this subsection, we consider a multiuser system with $K \geq 2$ users and the AP is equipped with $M=6$ antennas, with $p_u=18$ dBm and $N=60$. In Fig. \ref{Fig_MU}, we investigate the achievable sum-rate by the proposed PDD-based algorithm (i.e. Algorithm \ref{PDD_algorithm_multiuser}) versus the number of users, $K$. It is observed that the performance gain achieved by amplitude control is more pronounced when $K$ increases, since the multiuser interference due to  imperfect CSI becomes more severe as compared to the single-user case. Besides, the performance by amplitude beamforming is better than that by phase beamforming. Similar to the single-user case, this result indicates that in the case of imperfect CSI, it may be more beneficial to employ amplitude beamforming of lower cost as compared to phase beamforming, although this is not true for the case with more accurate CSI or perfect CSI.

\begin{figure}[!hhh] 
	\centering
	\scalebox{0.42}{\includegraphics{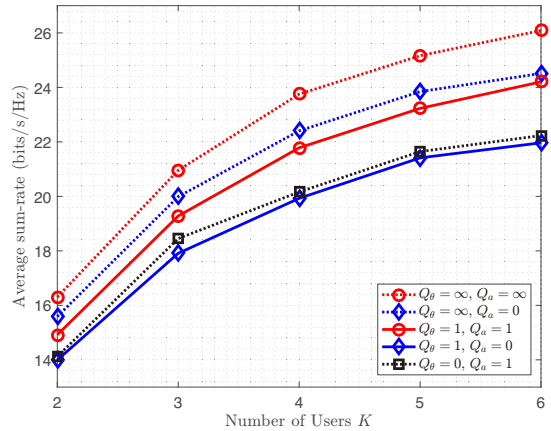}}
	\caption{Achievable sum-rate versus number of users, $K$.} 
	\label{Fig_MU}
\end{figure}

\section{Conclusions} \label{Section_conclusion}
In this paper, we studied the beamforming optimization in an IRS-aided multiuser system under imperfect CSI. Specifically, we first derived the distribution of CSI errors based on a practical time-varying reflection pattern based channel estimation method and then formulated the achievable rate maximization problem by jointly optimizing the reflection coefficients at the IRS as well as the transmit precoders at the AP. A penalized Dinkelbach-BSUM algorithm and a PDD-based algorithm were proposed for the single-user and multiuser cases, respectively. Simulation results showed that additional performance gains can be achieved by controlling the reflection amplitudes as compared to the case with full reflection/phase-shift control only, especially when the channel training resources are limited and/or the AP transmit power is high. It was also unveiled that IRS amplitude beamforming can be a low-cost alternative of the conventional phase beamforming in practice, yet offering comparable and even more favorable performance when CSI errors become a severe issue.

\begin{appendix}	
\subsection{Proof of Proposition \ref{achievable_rate_prop}} \label{appendix_achievable_rate}
According to \cite{Medard2000}, we can expand the mutual information based on differential entropy as follows:
\begin{equation} \label{deferential_entropy_interference}
I(s_k; y_k|\hat{\mathbf{H}}, \hat{\mathbf{h}}_d ) = h(s_k|\hat{\mathbf{H}}, \hat{\mathbf{h}}_d ) - h(s_k| y_k,\hat{\mathbf{H}}, \hat{\mathbf{h}}_d ).
\end{equation}
First, since $s_k$ is CSCG distributed with zero-mean and unit-variance, the first term on the right-hand-side (RHS) of \eqref{deferential_entropy_interference} becomes
$ \log (2\pi e)$. Then, using the fact that the entropy of a random variable with given variance is upper-bounded by the entropy of a Gaussian random variable with the same variance \cite{Medard2000}, the second term can be upper-bounded by the entropy of a Gaussian random variable with variance equal to\footnote{Here, we employ the following fact: $h(s_k| y_k,\hat{\mathbf{H}}, \hat{\mathbf{h}}_d ) = h(s_k-a_ky_k| y_k,\hat{\mathbf{H}}, \hat{\mathbf{h}}_d ) \leq h(s_k-a_ky_k|\hat{\mathbf{H}}, \hat{\mathbf{h}}_d ) $ holds for any $a_k$.}
\begin{equation}
\mathbb{E}_{\{s_k,\Delta \mathbf{H}_{k},\Delta \mathbf{h}_{d,k}, n_k\}}\left((s_k-g_k^H y_k) (s_k-g_k^H y_k)^H|\hat{\mathbf{H}}, \hat{\mathbf{h}}_d\right),
\end{equation}
where $g_k^Hy_k$ is the (linear) MMSE estimation of $s_k$. In the following, we equivalently rewrite $h(s_k| y_k,\hat{\mathbf{H}}, \hat{\mathbf{h}}_d)$ as $h(s_k| y_k,\bar{\mathbf{H}})$, with $\bar{\mathbf{H}}= \{\bar{\mathbf{H}}_{k}\}_{k\in\mathcal{K}}$. As a result, we can upper-bound $h(s_k| y_k,\bar{\mathbf{H}})$ as follows:
\begin{equation}
h(s_k| y_k,\bar{\mathbf{H}}) \leq  \log \left(2\pi e \left(1 - \frac{|\tilde{\mathbf{v}}^H \bar{\mathbf{H}}_k \mathbf{w}_k|^2}{\Psi_{k}^d + |\tilde{\mathbf{v}}^H \bar{\mathbf{H}}_{k}\mathbf{w}_k|^2}\right)\right),
\end{equation}
where $\tilde{\mathbf{v}} = [1, \mathbf{v}^T]^T$, and
\begin{equation} \label{effective_INR}
	\begin{aligned}
\Psi_{k}^d = & \underbrace{\mathbb{E} \{\tilde{\mathbf{v}}^H \Delta \tilde{\mathbf{H}}_k  \sum\limits_{j\in \mathcal{K}} \mathbf{w}_j \mathbf{w}_j^H \Delta \tilde{\mathbf{H}}_k^H \tilde{\mathbf{v}}\} }_{(a)} \\
&+ \sum\limits_{j \in \mathcal{K}\backslash k}  |\tilde{\mathbf{v}}^H \bar{\mathbf{H}}_{k} \mathbf{w}_j|^2 +  \sigma_k^2.
\end{aligned}
\end{equation}
To derive a tractable form of $(a)$ in \eqref{effective_INR}, we resort to the eigen-decomposition  $\sum_{j \in \mathcal{K}} \mathbf{w}_j \mathbf{w}_j^H  = \mathbf{Q} \mathbf{\Sigma}\mathbf{Q}^H$ and then  $(a)$ can be equivalently expressed as
\begin{equation}
(a) = \mathbb{E} \{\tilde{\mathbf{v}}^H \Delta \tilde{\mathbf{H}}_k  \mathbf{\Sigma} \Delta \tilde{\mathbf{H}}_k^H \tilde{\mathbf{v}}\} \overset{}{=} \mathbb{E} \left\{ \sum\limits_{m=1}^M  \lambda_m \tilde{\mathbf{v}}^H \bar{\mathbf{V}}^m_{k}\tilde{\mathbf{v}} \right\},
\end{equation}
where $\bar{\mathbf{V}}_{ij,k}^m = \dot{v}_{im,k} \dot{v}_{jm,k}^*$, 
$\dot{v}_{ij,k}$ represents the element on the $i$-th row and $j$-th column of the matrix $ \frac{1}{\sqrt{p_{u,k}}}  (\mathbf{V}^{\dagger})^H \mathbf{N}_{u,k}^H$ and $\lambda_i$ denotes the $i$-th diagonal element of $\mathbf{\Sigma}$. Furthermore, we observe that $\mathbb{E}\{\bar{\mathbf{V}}^m_{k} \} = \bar{\mathbf{V}}_k, \forall m$ and $\bar{\mathbf{V}}_{ij,k}=\frac{\varepsilon_{k}^2}{p_{u,k}} (\check{\mathbf{v}}_{:,i}^* \cdot \check{\mathbf{v}}_{:,j})$. 
Therefore, $\Psi_{k}^d $ can be equivalently expressed in the following deterministic form:
\begin{equation} 
\begin{aligned}
\Psi_{k}^d = & \sum\limits_{j \in \mathcal{K}\backslash k}  |(\mathbf{v}^H \hat{\mathbf{H}}_{k} + \hat{\mathbf{h}}_{d,k}^H) \mathbf{w}_j|^2\\
& +  \tilde{\mathbf{v}}^H \bar{\mathbf{V}}_k\tilde{\mathbf{v}} \sum\limits_{j \in \mathcal{K}}\|\mathbf{w}_j\|^2+  \sigma_k^2,
\end{aligned}
\end{equation}
which further leads to \eqref{inr_robust}.
Consequently, we can lower-bound $I(s_k; y_k|\hat{\mathbf{H}}, \hat{\mathbf{h}}_d )$ by \eqref{lower_bound}, which thus completes the proof.

\subsection{Proof of Proposition \ref{prop_convergent}} \label{appendix_C}
Let $f_n(\mathbf{v},\mathbf{u})$ and $f_d(\mathbf{v},\mathbf{u})$ denote the numerator and denominator of the objective function of problem \eqref{single_user_problem_d_eq3}, respectively. Also let $P(\mathbf{v},\mathbf{u},y) \triangleq f_n(\mathbf{v},\mathbf{u}  )-y\left(f_d(\mathbf{v},\mathbf{u})+\frac{1}{\beta}\| \mathbf{v} - \mathbf{u}\|^2\right)$ denote the objective value of problem \eqref{Dinkelbach_problem}. Then, we have
\begin{equation} \label{prop_convergent_1}
\begin{aligned}
 P(\mathbf{v}[i_d+1],& \mathbf{u}[i_d+1],y[i_d]) \\
 & \overset{(\textrm{i})}{\geq} P(\mathbf{v}[i_d+1],\mathbf{u}[i_d],y[i_d]) \\
& \overset{(\textrm{ii})}{\geq} P(\mathbf{v}[i_d],\mathbf{u}[i_d],y[i_d]) \\
& = P(\mathbf{v}[i_d+1],\mathbf{u}[i_d+1],y[i_d+1]) = 0,
\end{aligned}
\end{equation}
where $(\textrm{i})$ is due to the optimality of the $\mathbf{u}$-update step \eqref{u_update} and $(\textrm{ii})$ is due to the facts that the $\mathbf{v}$-update step \eqref{variable_update} is optimal for problem \eqref{subproblem1} and the objective function of \eqref{subproblem1} satisfies Assumption A in \cite{Hong2016}, i.e., the approximation function is a global upper bound of the objective function of \eqref{Dinkelbach_problem} (before ignoring the irrelevant terms) and their first-order derivatives are the same at the point of approximation; and the last two equalities are due to the Dinkelbach variable update step \eqref{dinkelbach_variable_update}. From \eqref{prop_convergent_1}, we have
\begin{equation} \label{prop_convergent_2}
\begin{aligned}
&f_n(\mathbf{v}[i_d+1],\mathbf{u}[i_d+1]  ) - y[i_d]\Big(f_d(\mathbf{v}[i_d+1],\mathbf{u}[i_d+1])\\
& +\frac{1}{\beta}\| \mathbf{v}[i_d+1] - \mathbf{u}[i_d+1]\|^2\Big)  \geq f_n(\mathbf{v}[i_d+1],\mathbf{u}[i_d+1]  )\\
& - y[i_d+1]\Big(f_d(\mathbf{v}[i_d+1],\mathbf{u}[i_d+1])+\frac{1}{\beta}\| \mathbf{v}[i_d+1] \\
& - \mathbf{u}[i_d+1]\|^2 \Big).
\end{aligned}
\end{equation}
Rearranging the terms in \eqref{prop_convergent_2} yields $y[i_d+1] \geq y[i_d]$. Therefore, $y$ is non-decreasing after each iteration, and since $y$ is also upper-bounded due to the constraints in problem \eqref{single_user_problem_d_relax}, the proposed penalized Dinkelbach-BSUM algorithm is guaranteed to converge with any given $\beta$. This thus completes the proof.

\subsection{BCD Algorithm  for Solving Problem \eqref{continuous_problem}} \label{appendix_BCD_continuous}
By focusing on the $n$-th reflecting element ($n \neq 1$), problem \eqref{continuous_problem} can be simplified as
\begin{equation} \label{n_subproblem}
	\begin{aligned}
\min \limits_{{v}_n}\; & \frac{-Pf_1(v_n)}{ Pf_2(v_n)+ \sigma^2} \\
\textrm{s.t.}\; & |v_n| \leq 1,
\end{aligned}
\end{equation}
where
\begin{equation} \label{no}
	\begin{aligned}
f_1(v_n)\triangleq & v_n^* \bm{\Phi}_{nn}  v_n +   \sum\limits_{j\in\tilde{\mathcal{N}} \backslash n} v_n^* \bm{\Phi}_{nj}  v_j \\
&+ \sum\limits_{i \in \tilde{\mathcal{N}} \backslash n} v_i^* \bm{\Phi}_{in}  v_n  + \sum\limits_{i\in\tilde{\mathcal{N}}\backslash n} \sum\limits_{j\in\tilde{\mathcal{N}} \backslash n} v_i^* \bm{\Phi}_{ij}  v_j,
\end{aligned}
\end{equation}
and $f_2(v_n)$ is similarly defined by replacing $\bm{\Phi}$ with $\bar{\mathbf{V}}$.
To derive the optimal solution of problem \eqref{n_subproblem}, we resort to the following Lagrangian function with $\lambda$ being the dual variable:
\begin{equation} \label{Lag}
\mathcal{L}(v_n,\lambda) = \frac{-Pf_1(v_n)}{ Pf_2(v_n)+ \sigma^2}  + \lambda(v_n^* v_n-1).
\end{equation}
Then, we consider the following three cases based on the value of $|v_n|$:
\begin{itemize}
	\item $|v_n|<1$: in this case, $\lambda=0$ must be satisfied due to the complementary slackness condition \cite{ConvexOptimization}. Let $x_n = \sum_{j\in\tilde{\mathcal{N}} \backslash n} \bm{\Phi}_{nj}  v_j$, $\tilde{x}_n = \sum_{j\in\tilde{\mathcal{N}} \backslash n} \bar{\mathbf{V}}_{nj}  v_j $, $z_n = \sum_{i\in\tilde{\mathcal{N}}\backslash n} \sum_{j\in\tilde{\mathcal{N}} \backslash n} v_i^* \bm{\Phi}_{ij}  v_j$ and $\tilde{z}_n = \sum_{i\in\tilde{\mathcal{N}}\backslash n} \sum_{j\in\tilde{\mathcal{N}} \backslash n} v_i^* \bar{\mathbf{V}}_{ij}  v_j$, and by resorting to the first-order optimality condition, we have
		\begin{equation} \label{v_n}
  \bar{a}_n v_n^2 + \bar{b}_n v_n + \bar{c}_n = 0,
	\end{equation}
	where $\bar{a}_n = P^2 \bm{\Phi}_{nn} \tilde{x}_n^* - P^2 \bar{\mathbf{V}}_{nn} x_n^*$, $\bar{b}_n = P^2 \bm{\Phi}_{nn} \tilde{z}_n+ P \bm{\Phi}_{nn} \sigma^2 + P^2 x_n \tilde{x}_n^* - P^2 \bar{\mathbf{V}}_{nn} z_n - P^2 \tilde{x}_n x_n^* $ and $\bar{c}_n = P^2 x_n \tilde{z}_n + P x_n \sigma^2 - P^2 \tilde{x}_n z_n$.
	Since \eqref{v_n} is a quadratic equation with respect to $v_n$, we can easily find its roots.
	\item $|v_n| = 1$: in this case, \eqref{Lag} reduces to 
	\begin{equation} \label{theta_n}
	\mathcal{L}(v_n,\lambda)  = -\frac{Pe^{\jmath \theta_n} x_n + Pe^{-\jmath \theta_n} x_n^* + \tilde{a}_n}{Pe^{\jmath \theta_n} \tilde{x}_n + Pe^{-\jmath \theta_n} \tilde{x}_n^* + \tilde{b}_n},
	\end{equation}
	where $\tilde{a}_n = P \bm{\Phi}_{nn} + P z_n$ and $\tilde{b}_n = P \bar{\mathbf{V}}_{nn} + P \tilde{z}_n + \sigma^2$. Let $\jmath P x_n \tilde{b}_n - \jmath P \tilde{x}_n \tilde{a}_n = x e^{\jmath \theta}$, $\tilde{c}_n = 2\Re\{ \jmath P^2 x_n \tilde{x}_n^*\} -  2\Re\{ \jmath P^2 \tilde{x}_n x_n^* \}$ and $v_n =e^{-\jmath \theta_n}$. By taking derivative of \eqref{theta_n} with respect to $\theta_n$, we obtain $2 x \cos(\theta + \theta_n) + \tilde{c}_n = 0 $, which further leads to $\theta_n = \arccos (-\frac{\tilde{c}_n}{2x}) -\theta$.
	\item $|v_n| = 0$: we have $v_n = 0$.
\end{itemize}
By comparing the objective values achieved by the above three cases, we can obtain the optimal solution of problem \eqref{continuous_problem}. Then, the proposed BCD algorithm for solving problem \eqref{continuous_problem} can be conducted by successively optimizing each reflection coefficient with the others being fixed.
\end{appendix}

\bibliographystyle{IEEETran}
\bibliography{references}

\begin{thebibliography}{10}
\providecommand{\url}[1]{#1}
\csname url@samestyle\endcsname
\providecommand{\newblock}{\relax}
\providecommand{\bibinfo}[2]{#2}
\providecommand{\BIBentrySTDinterwordspacing}{\spaceskip=0pt\relax}
\providecommand{\BIBentryALTinterwordstretchfactor}{4}
\providecommand{\BIBentryALTinterwordspacing}{\spaceskip=\fontdimen2\font plus
\BIBentryALTinterwordstretchfactor\fontdimen3\font minus
  \fontdimen4\font\relax}
\providecommand{\BIBforeignlanguage}[2]{{%
\expandafter\ifx\csname l@#1\endcsname\relax
\typeout{** WARNING: IEEEtran.bst: No hyphenation pattern has been}%
\typeout{** loaded for the language `#1'. Using the pattern for}%
\typeout{** the default language instead.}%
\else
\language=\csname l@#1\endcsname
\fi
#2}}
\providecommand{\BIBdecl}{\relax}
\BIBdecl

\bibitem{Zhaoglobecom2020}
M.~M. {Zhao}, Q.~{Wu}, M.~J. {Zhao}, and R.~{Zhang}, ``{IRS}-aided wireless
  communication with imperfect {CSI}: {Is} amplitude control helpful or not?''
  in \emph{Proc. IEEE Global Communications Conference (GLOBECOM)}, Dec. 2020,
  pp. 1--6.

\bibitem{Boccardi2014}
F.~{Boccardi}, R.~W. {Heath}, A.~{Lozano}, T.~L. {Marzetta}, and P.~{Popovski},
  ``Five disruptive technology directions for {5G},'' \emph{IEEE Commun. Mag.},
  vol.~52, no.~2, pp. 74--80, Feb. 2014.

\bibitem{Wu2019Magazine}
Q.~{Wu} and R.~{Zhang}, ``Towards smart and reconfigurable environment:
  {I}ntelligent reflecting surface aided wireless network,'' \emph{IEEE Commun.
  Mag.}, vol.~58, no.~1, pp. 106--112, Jan. 2020.

\bibitem{Wu2018_journal}
{Q. {Wu} and R. {Zhang}}, ``Intelligent reflecting surface enhanced wireless
  network via joint active and passive beamforming,'' \emph{IEEE Trans.
  Wireless Commun.}, vol.~18, no.~11, pp. 5394--5409, Nov. 2019.

\bibitem{Basar2019}
E.~{Basar}, M.~{Di Renzo}, J.~{De Rosny}, M.~{Debbah}, M.~{Alouini}, and
  R.~{Zhang}, ``Wireless communications through reconfigurable intelligent
  surfaces,'' \emph{IEEE Access}, vol.~7, pp. 116\,753--116\,773, Aug. 2019.

\bibitem{Huang2019}
C.~{Huang}, A.~{Zappone}, G.~C. {Alexandropoulos}, M.~{Debbah}, and C.~{Yuen},
  ``Reconfigurable intelligent surfaces for energy efficiency in wireless
  communication,'' \emph{IEEE Trans. Wireless Commun.}, vol.~18, no.~8, pp.
  4157--4170, Aug. 2019.

\bibitem{cui2014coding}
T.~J. Cui, M.~Q. Qi, X.~Wan, J.~Zhao, and Q.~Cheng, ``Coding metamaterials,
  digital metamaterials and programmable metamaterials,'' \emph{L. Sci. \&
  Appl.}, vol.~3, no.~10, pp. e218--e218, Oct. 2014.

\bibitem{Yang2019}
Y.~{Yang}, B.~{Zheng}, S.~{Zhang}, and R.~{Zhang}, ``Intelligent reflecting
  surface meets {OFDM:} {P}rotocol design and rate maximization,'' \emph{IEEE
  Trans. Commun.}, vol.~68, no.~7, pp. 4522--4535, Jul. 2020.

\bibitem{Cui2019}
M.~{Cui}, G.~{Zhang}, and R.~{Zhang}, ``Secure wireless communication via
  intelligent reflecting surface,'' \emph{IEEE Wireless Commun. Lett.}, vol.~8,
  no.~5, pp. 1410--1414, Oct 2019.

\bibitem{Jiang2019}
T.~{Jiang} and Y.~{Shi}, ``Over-the-air computation via intelligent reflecting
  surfaces,'' in \emph{Proc. IEEE GLOBECOM}, Dec. 2019, pp. 1--6.

\bibitem{zuo2020resource}
J.~{Zuo}, Y.~{Liu}, Z.~{Qin}, and N.~{Al-Dhahir}, ``Resource allocation in
  intelligent reflecting surface assisted {NOMA} systems,'' \emph{IEEE Trans.
  Commun.}, vol.~68, no.~11, pp. 7170--7183, Nov. 2020.

\bibitem{ZhangMIMO}
S.~Zhang and R.~Zhang, ``Capacity characterization for intelligent reflecting
  surface aided {MIMO} communication,'' \emph{IEEE J. Sel. Areas Commun.},
  vol.~38, no.~8, pp. 1823--1838, Aug. 2020.

\bibitem{Mishra2019ICASSP}
D.~{Mishra} and H.~{Johansson}, ``Channel estimation and low-complexity
  beamforming design for passive intelligent surface assisted {MISO} wireless
  energy transfer,'' in \emph{Proc. IEEE International Conference on Acoustics,
  Speech and Signal Processing (ICASSP)}, May 2019, pp. 4659--4663.

\bibitem{zheng2019intelligent}
B.~{Zheng} and R.~{Zhang}, ``Intelligent reflecting surface-enhanced {OFDM}:
  {C}hannel estimation and reflection optimization,'' \emph{IEEE Wireless
  Commun. Lett.}, vol.~9, no.~4, pp. 518--522, Apr. 2020.

\bibitem{you2019progressive}
C.~You, B.~Zheng, and R.~Zhang, ``Channel estimation and passive beamforming
  for intelligent reflecting surface: {D}iscrete phase shift and progressive
  refinement,'' \emph{IEEE J. Sel. Areas Commun.}, vol.~38, no.~11, pp.
  2604--2620, Nov. 2020.

\bibitem{wang2019channel}
Z.~{Wang}, L.~{Liu}, and S.~{Cui}, ``Channel estimation for intelligent
  reflecting surface assisted multiuser communications: Framework, algorithms,
  and analysis,'' \emph{IEEE Trans. Wireless Commun.}, vol.~19, no.~10, pp.
  6607--6620, Oct. 2020.

\bibitem{zheng2020intelligent}
B.~{Zheng}, C.~{You}, and R.~{Zhang}, ``Intelligent reflecting surface assisted
  multi-user {OFDMA}: {C}hannel estimation and training design,'' \emph{IEEE
  Trans. Wireless Commun.}, vol.~19, no.~12, pp. 8315--8329, Dec. 2020.

\bibitem{He2019_CE}
Z.~{He} and X.~{Yuan}, ``Cascaded channel estimation for large intelligent
  metasurface assisted massive {MIMO},'' \emph{IEEE Wireless Commun. Lett.},
  vol.~9, no.~2, pp. 210--214, Feb. 2020.

\bibitem{chen2019channel}
J.~Chen, Y.-C. Liang, H.~V. Cheng, and W.~Yu, ``Channel estimation for
  reconfigurable intelligent surface aided multi-user {MIMO} systems,''
  \emph{arXiv preprint arXiv:1912.03619}, 2019.

\bibitem{yu2019robust}
X.~{Yu}, D.~{Xu}, Y.~{Sun}, D.~W.~K. {Ng}, and R.~{Schober}, ``Robust and
  secure wireless communications via intelligent reflecting surfaces,''
  \emph{IEEE J. Sel. Areas Commun.}, vol.~38, no.~11, pp. 2637--2652, Nov.
  2020.

\bibitem{zhou2019robust}
G.~{Zhou}, C.~{Pan}, H.~{Ren}, K.~{Wang}, M.~{Di Renzo}, and A.~{Nallanathan},
  ``Robust beamforming design for intelligent reflecting surface aided {MISO}
  communication systems,'' \emph{IEEE Wireless Commun. Lett.}, vol.~9, no.~10,
  pp. 1658--1662, Oct. 2020.

\bibitem{zhou2020framework}
G.~{Zhou}, C.~{Pan}, H.~{Ren}, K.~{Wang}, and A.~{Nallanathan}, ``A framework
  of robust transmission design for {IRS}-aided {MISO} communications with
  imperfect cascaded channels,'' \emph{IEEE Trans. Signal Process.}, vol.~68,
  pp. 5092--5106, 2020.

\bibitem{Wu2019Discrete}
Q.~{Wu} and R.~{Zhang}, ``Beamforming optimization for wireless network aided
  by intelligent reflecting surface with discrete phase shifts,'' \emph{IEEE
  Trans. Commun.}, vol.~68, no.~3, pp. 1838--1851, Mar. 2020.

\bibitem{guo2019weighted}
H.~{Guo}, Y.~{Liang}, J.~{Chen}, and E.~G. {Larsson}, ``Weighted sum-rate
  maximization for reconfigurable intelligent surface aided wireless
  networks,'' \emph{IEEE Trans. Wireless Commun.}, vol.~19, no.~5, pp.
  3064--3076, May 2020.

\bibitem{Bertsekas1999}
D.~Bertsekas, \emph{Nonlinear Programming}.\hskip 1em plus 0.5em minus
  0.4em\relax 2nd ed. Belmont, MA: Athena Scientific, 1999.

\bibitem{dinkelbach1967nonlinear}
W.~Dinkelbach, ``On nonlinear fractional programming,'' \emph{Manage. Sci.},
  vol.~13, no.~7, pp. 492--498, Mar. 1967.

\bibitem{Hong2016}
M.~{Hong}, M.~{Razaviyayn}, Z.~{Luo}, and J.~{Pang}, ``A unified algorithmic
  framework for block-structured optimization involving big data: {W}ith
  applications in machine learning and signal processing,'' \emph{IEEE Signal
  Process. Mag.}, vol.~33, no.~1, pp. 57--77, Jan. 2016.

\bibitem{shi2017penalty}
Q.~{Shi} and M.~{Hong}, ``Penalty dual decomposition method for nonsmooth
  nonconvex optimization—{Part I}: {A}lgorithms and convergence analysis,''
  \emph{IEEE Trans. Signal Process.}, vol.~68, pp. 4108--4122, 2020.

\bibitem{zhao2019intelligent}
M.~M. {Zhao}, Q.~{Wu}, M.~J. {Zhao}, and R.~{Zhang}, ``Intelligent reflecting
  surface enhanced wireless network: {Two}-timescale beamforming
  optimization,'' \emph{IEEE Trans. Wireless Commun.}, vol.~20, no.~1, pp.
  2--17, Jan. 2021.

\bibitem{PhysRevApplied}
{F. Liu \emph{et al.}}, ``Intelligent metasurfaces with continuously tunable
  local surface impedance for multiple reconfigurable functions,'' \emph{Phys.
  Rev. Applied}, vol.~11, p. 044024, Apr. 2019.

\bibitem{Nayeri2018book}
P.~Nayeri, F.~Yang, and A.~Z. Elsherbeni, \emph{Reflectarray antennas:
  {T}heory, designs, and applications}.\hskip 1em plus 0.5em minus 0.4em\relax
  John Wiley \& Sons, 2018.

\bibitem{Abeywickrama2020}
S.~{Abeywickrama}, R.~{Zhang}, Q.~{Wu}, and C.~{Yuen}, ``Intelligent reflecting
  surface: {P}ractical phase shift model and beamforming optimization,''
  \emph{IEEE Trans. Commun.}, vol.~68, no.~9, pp. 5849--5863, Sep. 2020.

\bibitem{jung2019optimality}
M.~Jung, W.~Saad, M.~Debbah, and C.~S. Hong, ``On the optimality of
  reconfigurable intelligent surfaces {(RISs)}: {P}assive beamforming,
  modulation, and resource allocation,'' \emph{arXiv preprint
  arXiv:1910.00968}, 2019.

\bibitem{Wu2020tutorial}
Q.~{Wu}, S.~{Zhang}, B.~{Zheng}, C.~{You}, and R.~{Zhang}, ``Intelligent
  reflecting surface aided wireless communications: {A} tutorial,'' \emph{IEEE
  Trans. Commun.}, DOI: 10.1109/TCOMM.2021.3051897, 2021.

\bibitem{LuoSDR2010}
Z.~{Luo}, W.~{Ma}, A.~M. {So}, Y.~{Ye}, and S.~{Zhang}, ``Semidefinite
  relaxation of quadratic optimization problems,'' \emph{IEEE Signal Process.
  Mag.}, vol.~27, no.~3, pp. 20--34, May 2010.

\bibitem{Wang2014}
K.~{Wang}, A.~M. {So}, T.~{Chang}, W.~{Ma}, and C.~{Chi}, ``Outage constrained
  robust transmit optimization for multiuser {MISO} downlinks: {T}ractable
  approximations by conic optimization,'' \emph{IEEE Trans. Signal Process.},
  vol.~62, no.~21, pp. 5690--5705, Nov. 2014.

\bibitem{ConvexOptimization}
S.~Boyd and L.~Vandenberghe, \emph{Convex Optimization}.\hskip 1em plus 0.5em
  minus 0.4em\relax Cambridge, U.K.: Cambridge Univ. Press, 2004.

\bibitem{Shi2011WMMSE}
Q.~{Shi}, M.~{Razaviyayn}, Z.~{Luo}, and C.~{He}, ``An iteratively weighted
  {MMSE} approach to distributed sum-utility maximization for a {MIMO}
  interfering broadcast channel,'' \emph{IEEE Trans. Signal Process.}, vol.~59,
  no.~9, pp. 4331--4340, Sep. 2011.

\bibitem{Zhao2019CL}
M.~{M. Zhao}, Q.~{Shi}, Y.~{Cai}, M.~{J. Zhao}, and Q.~{Yu}, ``Decoding binary
  linear codes using penalty dual decomposition method,'' \emph{IEEE Commun.
  Lett.}, vol.~23, no.~6, pp. 958--962, Jun. 2019.

\bibitem{CVX}
M.~Grant and S.~Boyd, ``{CVX}: Matlab software for disciplined convex
  programming, version 2.1,'' \url{http://cvxr.com/cvx}, Mar. 2014.

\bibitem{Medard2000}
M.~{Medard}, ``The effect upon channel capacity in wireless communications of
  perfect and imperfect knowledge of the channel,'' \emph{IEEE Trans. Inf.
  Theory}, vol.~46, no.~3, pp. 933--946, May 2000.

\end{thebibliography}

\end{document}